\pgfplotsset{compat=1.3}
\newcommand{\mat}[1]{\mathbf {#1}}
\newcommand{\vect}[1]{{\mathbf #1}}
\newcommand{\tr}{\text{tr}}
\newcommand{\MSE}{\mathrm{MSE}}
\newcommand{\HDL}{\mat{H}_\mathrm{DL}}
\newcommand{\fref}[1]{Fig.~\ref{#1}}
\newcommand{\eref}[1]{(\ref{#1})}
\newcommand{\cref}[1]{Chapter~\ref{#1}}
\newcommand{\sref}[1]{Section~\ref{#1}}
\newtheorem{theorem}{Theorem}[section]
\newtheorem{proposition}[theorem]{Proposition}
\newtheorem{lemma}{Lemma}
\newtheorem{corollary}{Corollary}
\begin{document}
%
\title{Single-Tap Precoders and Decoders for Multi-User MIMO FBMC-OQAM under Strong Channel Frequency Selectivity}
%
%
%

\author{Fran\c{c}ois Rottenberg,~\IEEEmembership{Student Member,~IEEE,}
        Xavier~Mestre,~\IEEEmembership{Senior Member,~IEEE,}
        Fran\c{c}ois Horlin,~\IEEEmembership{Member,~IEEE,}
        and~J\'er\^ome Louveaux,~\IEEEmembership{Member,~IEEE}
\thanks{Fran\c{c}ois Rottenberg and J\'er\^ome Louveaux are with the Universit\'e catholique de Louvain, 1348 Louvain-la-Neuve, Belgium (e-mail: francois.rottenberg@uclouvain.be;jerome.louveaux@uclouvain.be).}
\thanks{Xavier Mestre is with the Centre Tecnològic de Telecomunicacions
	de Catalunya, 08860 Barcelona, Spain (e-mail: xavier.mestre@cttc.cat).}%
\thanks{Fran\c{c}ois Rottenberg and Fran\c{c}ois Horlin are with the Universit\'e libre de Bruxelles, 1050 Brussel, Belgium (e-mail: fhorlin@ulb.ac.be).}
}
\maketitle

\begin{abstract}
The design of linear precoders or decoders for multi-user (MU) multiple-input multiple-output (MIMO) filterbank multicarrier (FBMC) modulations in the case of strong channel frequency selectivity is presented. {\color{black} The users and the base station (BS) communicate using space division multiple access (SDMA).} The low complexity proposed solution is based on a single tap per-subcarrier precoding/decoding matrix at the base station (BS) in the downlink/uplink. As opposed to classical approaches that assume flat channel frequency selectivity at the subcarrier level, the BS does not make this assumption and takes into account the distortion caused by channel frequency selectivity. The expression of the FBMC asymptotic mean squared error (MSE) in the case of strong channel selectivity derived in earlier works is developed and extended. The linear precoders and decoders are found by optimizing the MSE formula under two design criteria, namely zero forcing (ZF) or minimum mean squared error (MMSE). Finally, simulation results demonstrate the performance of the optimized design. As long as the number of BS antennas is larger than the number of users, it is shown that those extra degrees of freedom can be used to compensate for the channel frequency selectivity.
\end{abstract}

\begin{IEEEkeywords}
FBMC, frequency selective channel, MU MIMO.
\end{IEEEkeywords}

%
\IEEEpeerreviewmaketitle

\section{Introduction}

%
%
%
%

\IEEEPARstart{O}{rthogonal} frequency division multiplexing (OFDM) is the most popular multicarrier modulation scheme nowadays. It is used for instance in systems such as WiFi, long term evolution (LTE) or digital video broadcasting (DVB). OFDM has been very attractive mainly because of its low complexity of implementation. The introduction of the cyclic prefix (CP) in OFDM allows for easy channel equalization. Extension to multiple-input multiple-output (MIMO) scenarios is straightforward thanks to the OFDM orthogonality ensured in the complex domain. At the same time, due to the rectangular pulse shaping of the fast Fourier transform (FFT) filters, OFDM systems exhibit very high frequency leakage and poor stopband attenuation. Furthermore, the use of the CP in OFDM significantly reduces the spectral efficiency of the system.

In the light of the shortcomings of OFDM, Offset-QAM-based filterbank multicarrier (FBMC-OQAM) modulation has been regarded as an attractive alternative. Rather than using a rectangle pulse in time, FBMC-OQAM uses a pulse shape which is more spread out in time and has much larger stopband attenuation \cite{farhang2011ofdm}. This in turn translates into higher spectral efficiency and relaxed synchronization constraints \cite{FBMC_synchro}. Moreover, it does not require CP overhead, which allows for a larger spectral efficiency. These advantages come at the expense of an increase in the system implementation complexity.

\begin{figure}[t!]
	\centering
	\includegraphics[width=0.45\textwidth]{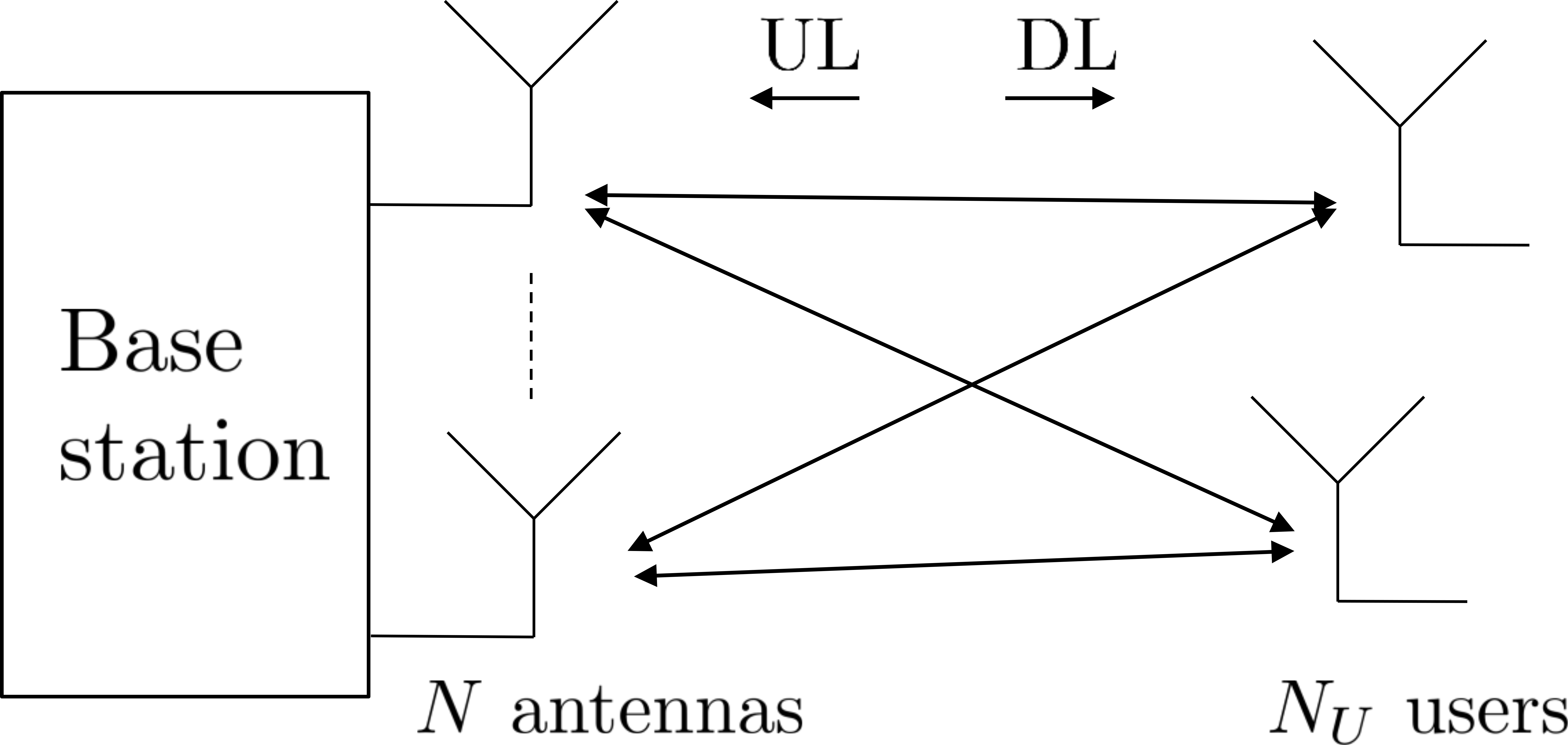}
	\caption{Multi-user MIMO scenario: $N_U$ single-antenna users and one base station with $N$ antennas communicate simultaneously in UL and DL using space division multiple access.}
	\label{fig:syst_model}
\end{figure}
Under frequency selective channels, single tap equalization is sufficient in OFDM to restore the system orthogonality. The same result occurs in FBMC if the assumption of a frequency flat channel at the subcarrier level is made, which is typically verified for mildly frequency selective channels. However, as the selectivity of the channel increases, FBMC begins to suffer from {\color{black}inter-symbol interference (ISI) and inter-carrier interference (ICI)} and the orthogonality is progressively destroyed \cite{mestre13tsp,ihalainen2007channel}. Many works in the literature have investigated this problem in the SISO case \cite{ihalainen2007channel,waldhauser2008mmse,baltar2009mmse,ikhlef2009enhanced} and later on for the MIMO case, see \cite{7491375} for recent review paper on the subject. Most of the approaches to mitigate channel frequency selectivity are based on the design of multi-tap fractionally spaced equalizers. For instance, in \cite{ihalainen2011channel}, the authors designed multi-tap decoding matrices following a frequency sampling design, i.e., they compute the time domain equalizer coefficients so that its frequency response passes through some well chosen target frequency points. On the other hand, \cite{caus2012transmitter} proposes a multi-tap filtering solution at both transmit and receive sides. {\color{black} This problem has also been analyzed in the MU MIMO context in several works. In \cite{6612142}, the authors extend the block diagonalization technique to FBMC systems. Through an iterative algorithm, the work of \cite{6877914} alleviates the dimensionality constraint of \cite{6612142} by allowing designs where the total number of receive antennas of the users exceeds the number of transmit antennas at the base station. In \cite{7178407,7499163}, multi-tap precoders and decoders are iteratively and jointly designed.} Moreover, the work originally devised for the SISO case in \cite{mestre13tsp} and later extended for the MIMO case in \cite{mestre2014parallel,Mestre16} proposes instead a parallel multi-stage processing architecture at both sides of the communication link. One should however note that iterative designs, multi-tap filtering and multi-stage processing increase the complexity of the system.

\begin{figure}[t!]
	\centering
		\includegraphics[width=0.7\textwidth]{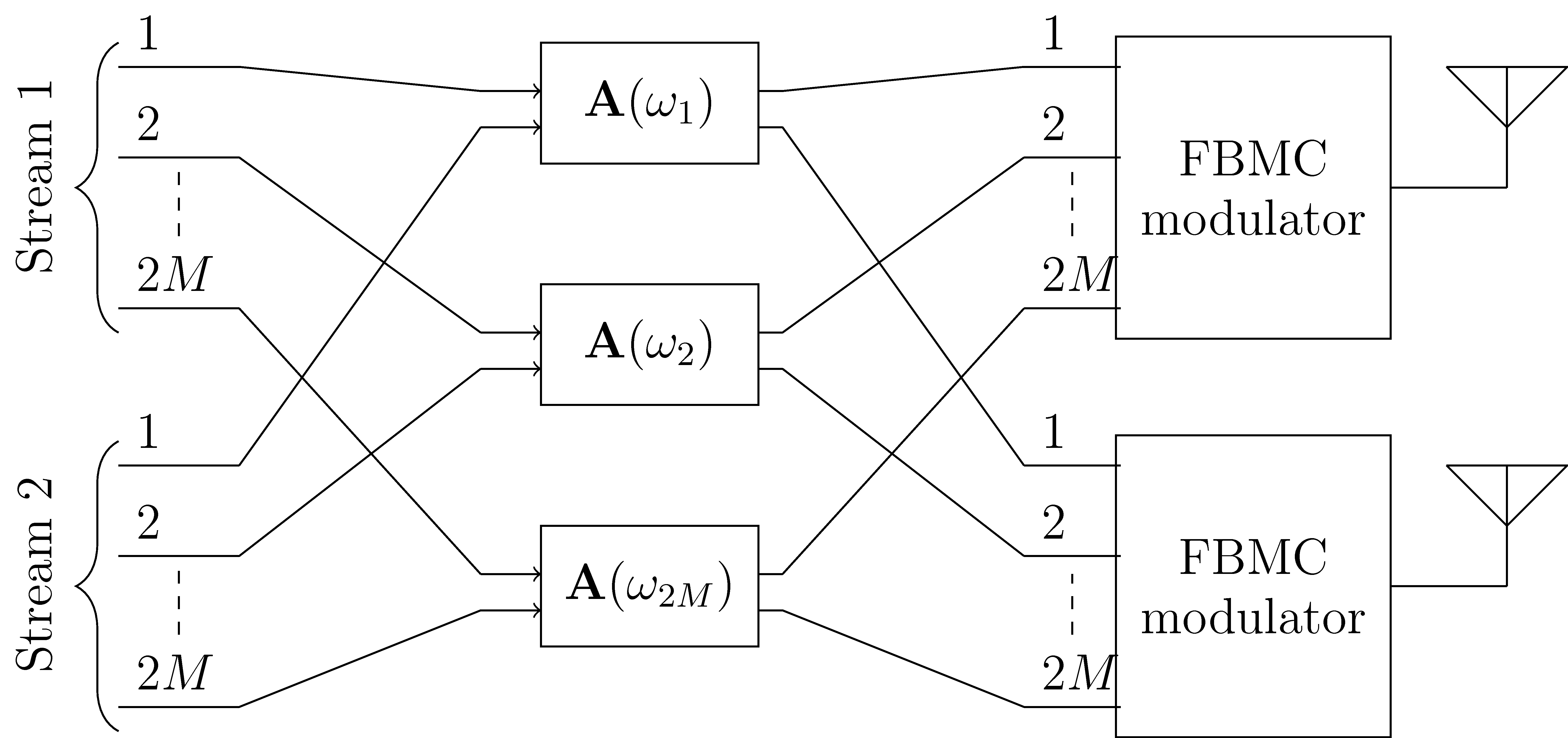}
	\caption{Per-subcarrier precoding at the transmitter.}
	\label{fig:trad_precoder}
\end{figure}

In this paper, in contrast to most of the proposed approaches to deal with channel frequency selectivity, we consider the classical low complexity approach based on one tap per-subcarrier precoding and decoding matrices. However, the assumption of a frequency flat channel at the subcarrier level is not made. A first order approximation of the per-subcarrier MSE for a general MIMO FBMC-OQAM system, {\color{black}including the effects of noise, multi-stream interference (MSI), ISI and ICI}, is proposed, relying not only on the channel frequency response evaluated at this subcarrier but also on its derivatives. This approximation generalizes the one given in \cite{Mestre16} that is valid only for the ZF case.

Furthermore, we optimize the MSE formula to design precoders and decoders in a MU MIMO context. As shown in \fref{fig:syst_model}, we consider a MU MIMO system with one base station (BS) and multiple single-antenna users that are not able to cooperate with each other\footnote{Note that one could straightforwardly apply the results of this paper to a point-to-point (PTP) communication link transmitting with pure spatial multiplexing.}. {\color{black} The users and the BS are assumed to use SDMA \cite[Chap. 10]{paulraj2003introduction}, i.e., they communicate simultaneously using the same time and frequency resources}. Taking into account at the same time the MSI, ISI and ICI caused by channel frequency selectivity during the optimization procedure, we show that even with the very simple chosen structure, one can exploit the degrees of freedom offered by the extra BS antennas to compensate for the distortion due to frequency selectivity. In both the uplink (UL) and downlink (DL) cases, two design criteria are considered, namely zero forcing (ZF) or minimum mean squared error (MMSE). From the asymptotic study at high signal-to-noise ratio (SNR), it is shown that the first order approximation of the distortion can be completely removed as soon as the number of BS antennas is twice as large as the number of users.

\begin{figure}[t!]
	\centering
		\includegraphics[width=0.7\textwidth]{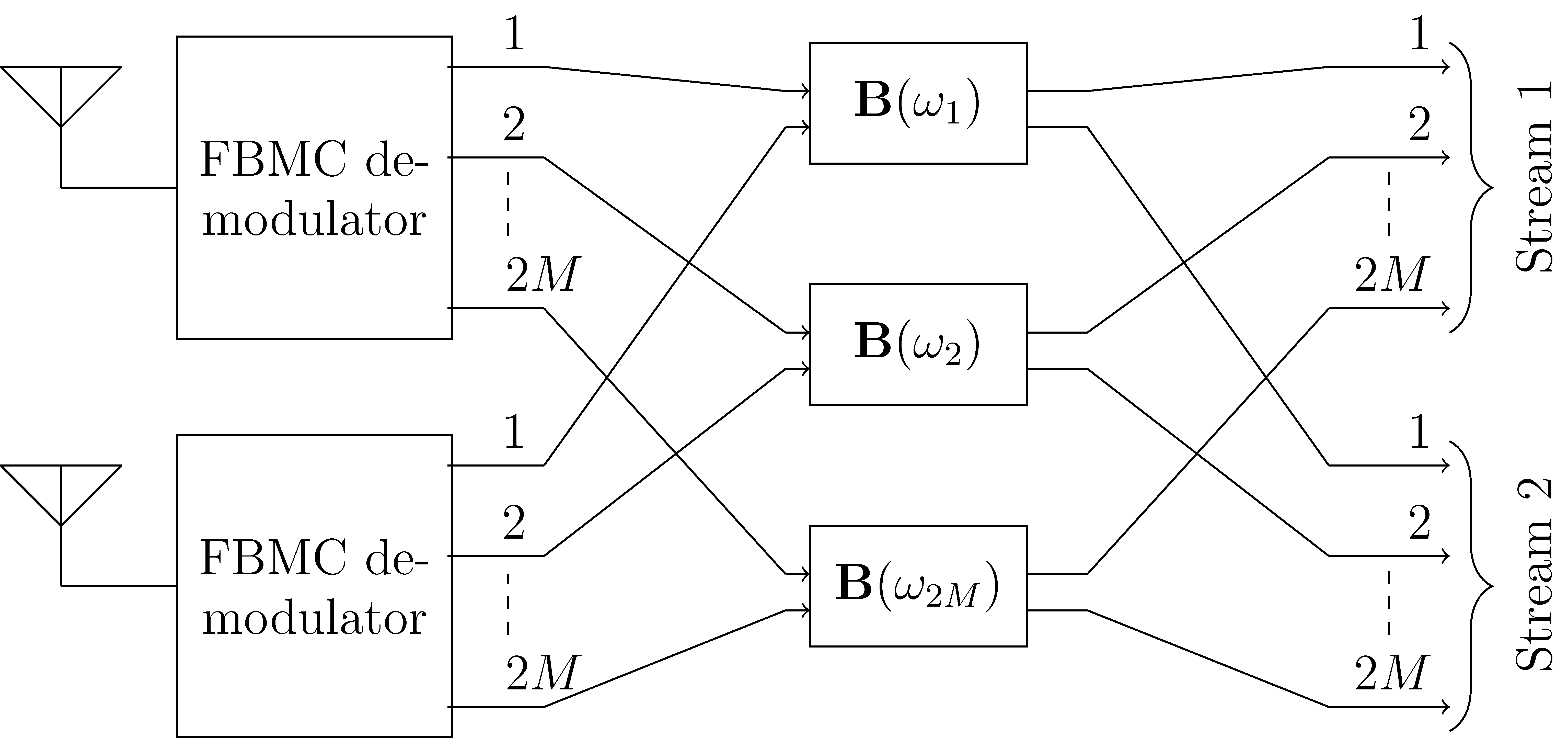}
	\caption{Per-subcarrier decoding at the receiver.}
	\label{fig:trad_receiver}
\end{figure}

The rest of this paper is structured as follows. \sref{section_MSE} details the data model for a general FBMC-OQAM MIMO transceiver and proposes an approximation of the mean squared error (MSE) of the system under strong channel frequency selectivity. \sref{section_optimal_precoder_decoder} optimizes the previously derived MSE formula for a MU MIMO scenario as a function of the linear precoder or decoder and under a ZF or a minimum mean squared error (MMSE) criterion. \sref{section_results} validates the accuracy of MSE approximation and the performance of the linear precoder and decoder through simulations. Finally, \sref{section_conclusion} concludes the paper and appendixes contain the mathematical proof of previous sections.

\subsection{Notations}
Vectors and matrices are denoted by bold lowercase and uppercase letters, respectively. Superscripts $^*$, $^T$ and $^H$ stand for conjugate, transpose and Hermitian operators. $\tr$, $\mathbb{E}$, $\Im$ and $\Re$ denote the trace, expectation, imaginary and real parts respectively. Symbol $O\left(  M^{-\ell}\right)  $ denotes a matrix of possibly increasing dimensions whose entries decay to zero faster than $M^{-\ell}$.

\section{MSE Formulation for General MIMO FBMC-OQAM System Under Strong Channel Frequency Selectivity}
\label{section_MSE} {\color{black}
We will first introduce the system model for a general MIMO FBMC-OQAM transmission and then give an approximation of the MSE at the output of the transceiver chain.

\subsection{General MIMO FBMC-OQAM transmission} \label{subsect:data_model}
Let us consider a MIMO FBMC-OQAM system with $N_T$ and $N_R$ antennas at the transmit and receive sides, respectively. The number of real-valued multicarrier symbols is denoted by $2N_s$ and the number of streams by $S$. 

Multicarrier modulations divide the transmission band into multiple narrow bands. If the number of subcarriers, denoted by $2M$, is large enough with respect to (w.r.t.) the channel delay spread, a common assumption is to assume that the channel is approximately frequency flat inside each sub-band so that precoding (pre-equalization) and decoding (equalization) operations can be performed at the subcarrier level. The block diagrams of the transmitter and receiver are depicted in \fref{fig:trad_precoder} and \fref{fig:trad_receiver}. At the transmitter, the precoding matrix at the $m$-th subcarrier is denoted by $\mat{A}(\omega_m)\in \mathbb{C}^{N_T \times S}$. At the receiver, the decoding matrix at the $m$-th subcarrier is denoted by $\mat{B}(\omega_m)\in \mathbb{C}^{S \times N_R}$.

The real-valued transmitted symbols denoted by $\vect{d}_{l,m} \in \mathbb{R}^{S \times 1}$ are first precoded and then FBMC-OQAM modulated using a prototype pulse $p[n]$ of length $L_p$. The transmitted signal at the different transmit antennas, denoted by $\vect{s}[n]\in \mathbb{C}^{N_T\times 1}$, is given by
\begin{align*}
	\vect{s}[n]&=\sum_{l=0}^{2N_s-1} \sum_{m=0}^{2M-1} \mat{A}(\omega_m)\vect{d}_{l,m} p_{l,m}[n]
\end{align*}
where $p_{l,m}[n]=\frac{j^{l+m}}{M}p[n-lM]e^{j\frac{2\pi}{2M}m(n-\frac{L_p-1}{2})}$. We denote by $\mat{H}(\omega)\in \mathbb{C}^{N_R\times N_T}$ the channel frequency response matrix. The signal at the different receive antennas, denoted by $\vect{r}[n]\in \mathbb{C}^{N_R\times 1}$, is given by
\begin{align*}
	\vect{r}[n]&=\sum_{b=-\infty}^{+\infty} \mathcal{H}[b]\vect{s}[n-b]+\vect{w} [n]
\end{align*}
where $\mathcal{H}[b]=\frac{1}{2\pi}\int_{0}^{2\pi}\mat{H}(\omega)e^{j\omega b}d\omega$ is the channel impulse response which is assumed not to change over the frame time duration. The column vector $\vect{w} [n]$ contains the additive white Gaussian noise samples. The received signal is FBMC-OQAM demodulated using prototype pulse $q[n]$ of length $L_q$. The signal after demodulation and decoding (equalization), at subcarrier $l_0$ and multicarrier symbol $m_0$, denoted by $\vect{x}_{l_0,m_0}\in \mathbb{C}^{S\times 1}$, may be written as
\begin{align*}
\vect{x}_{l_0,m_0}&=\mat{B}(\omega _{m_0})\sum_{n=0}^{L_q-1} \vect{r}[n] q^*_{l_0,m_0}[n]
\end{align*}
where $q_{l_0,m_0}[n]=\frac{j^{l_0+m_0}}{M}\tilde{q}[n-l_0M]e^{j\frac{2\pi}{2M}m_0(n-\frac{L_q-1}{2})}$ and where $\tilde{q}[n]$ is the reversed version of the receive prototype, namely $\tilde{q}[n]=q[L_q-1-n]$. Finally the estimated symbols are obtained by taking the real part, i.e., $\hat{\vect{d}}_{l_0,m_0}=\Re \{\vect{x}_{l_0,m_0}\}$.

\subsection{MSE formulation}
We define the MSE at the output of the transceiver chain corresponding to all streams as\footnote{Observe that we introduce the factor $2$ in order to consider the distortion of the complex symbols, and not the real ones.}
\begin{align}
	\mathrm{MSE}(m)&=2 \mathbb{E} \left(\|\hat{\vect{d}}_{l,m}-{\vect{d}}_{l,m}\|^2\right)\nonumber\\
	&=P_d(m)+N_0 \tr \left[\mat{B}(\omega _{m})\mat{B}(\omega _{m})^H\right]  \label{eq:asymptotic_MSE}
\end{align}
where $N_0$ is the noise power and the expectation is taken over transmitted symbols and noise. Since noise and symbols are uncorrelated, their effect can be separated in the two terms of (\ref{eq:asymptotic_MSE}). The term $P_d(m)$ corresponds to the distortion due to MSI, ISI and ICI. The designs of \fref{fig:trad_precoder} and \fref{fig:trad_receiver} usually rely on channel frequency flatness at the subcarrier level. When the variation of the channel becomes non-negligible, this assumption becomes inaccurate and distortion will increase with the appearance of MSI, ISI and ICI (, i.e., the term $P_d(m)$ increases). To be able to give an analytical expression of $P_d(m)$, we make the following assumptions:
}

$\mathbf{(As1)}$ The actual precoding and decoding matrices implemented at the $m$-th subcarrier result from the evaluation of the functions $\mat{A}(\omega)$ and $\mat{B}(\omega)$ at frequency $\omega_m=\frac{2\pi (m-1)}{2M}$. The precoder, decoder and channel frequency response matrices, $\mat{A}(\omega) \in \mathbb{C}^{N_T \times S}$, $\mat{B}(\omega)\in \mathbb{C}^{S \times N_R}$ and $\mat{H}(\omega)\in \mathbb{C}^{N_R\times N_T}$, are twice differentiable functions of the frequency $\omega$ on the torus $\mathbb{R}%
/2\pi\mathbb{Z}$. 

$\mathbf{(As2)}$ The prototype pulse $p[n]$ is assumed identical at transmit and receive sides, so that $p[n]=q[n]$. It is either symmetric or anti-symmetric in the time domain and it meets the perfect reconstruction\ (PR) conditions. It has length $2M\kappa$, where
$\kappa$ is the overlapping factor. Furthermore, $p[n]$ is obtained by
discretization of a smooth real-valued analog waveform $p(t)$, which is a $\mathcal{C}^{\infty}\left(  \left[  -T_{s}\kappa/2,T_{s}%
\kappa/2\right]  \right)  $ function, so that
\[
p[n]=p\left(  \left(  n-\frac{2M\kappa+1}{2}\right)  \frac{T_{s}}{2M}\right)
,\ n=1,\ldots,2M\kappa
\]
where $T_{s}$ is the multicarrier symbol period. Furthermore, the pulse $p(t)$
and its derivatives are null at the end-points of the support, namely at
$t=\pm T_{s}\kappa/2$.

Thanks to the above assumption, we can define $p^{(r)}[n]$ as the sampled
version of the $r$-th derivative of $p(t)$, that is
\[
p^{(r)}[n]=T_{s}^{r}p^{(r)}\left(  \left(  n-\frac{2M\kappa+1}{2}\right)
\frac{T_{s}}{2M}\right), n=1,\ldots,2M\kappa.
\]
$\mathbf{(As3)}$ The real-valued symbols $\vect{d}_{l,m}$ are independent, identically distributed bounded random variables with zero mean and variance $P_{s}/2$.

We are now in a position to introduce the main result of this section.

\begin{theorem}\label{Theorem}
	Under $\mathbf{(As1)-(As3)}$, the MSE of the complex symbols at the $m$-th subcarrier can
	be expressed as%
	\begin{align}
		P_{d}(m)  &=P_{s}\text{tr}\left[  \left(  \mat{BHA-I}\right)  \left(
		\mat{BHA-I}\right)  ^{H}\right]  \nonumber \\
		&  +\frac{2\eta_{1010}^{(+,-)}}{\left(  2M\right) ^2}\text{tr}\left[  \left(
		\mat{BH}'\mat{A}\right)  \left(  \mat{BH}'\mat{A}%
		\right)  ^{H}\right]  \nonumber\\
		&  +\frac{2\eta_{1010}^{(+,-)}}{\left(  2M\right)  ^2}\Re\text{tr}\left[
		\left(  \mat{BHA-I}\right)  \left(  \mat{BH}''\mat{A}\right)^{H}\right]  \nonumber\\
		&  +\frac{4\left(  \eta_{1010}^{(+,-)}+\eta_{0011}^{(-,+)}\right)  }{\left(  2M\right)  ^2}%
		\text{tr}\left[  \Im\left(  \mat{BHA-I}\right)
		\Im^{T}\left(  \mat{B}\left(  \mat{HA}'\right)
		'\right)  \right]  \nonumber\\
		&  +\frac{4\left(  \eta_{1010}^{(+,-)}+\eta_{0011}^{(-,+)}\right)  }{\left(  2M\right)  ^2}%
		\text{tr}\left[  \Im\left(  \mat{BHA}'\right)
		\Im^{T}\left(  \mat{B}\left(  \mat{HA}\right)
		'\right)  \right] \nonumber\\
		&+O\left(  M^{-2}\right)  \label{eq:asymptotic_distortion},
	\end{align}
	where $\eta_{1010}^{(+,-)}$ and $\eta_{0011}^{(-,+)}$ are pulse-related quantities defined in Appendix \ref{sec:proof_distortion}, $'$ and $''$ refer to the first and second derivatives and where all frequency-depending matrices are evaluated at $\omega=\omega_{m}$, e.g. $\mat{A}=\mat{A}(\omega_m)$, $\mat{H}'=\mat{H}'(\omega_m)$...
\end{theorem}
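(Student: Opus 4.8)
The plan is to (i) express the noiseless demodulated-and-decoded vector $\vect{x}_{l_0,m_0}$ as a linear combination of all transmitted symbols weighted by a precoded/decoded transmultiplexer response matrix, (ii) expand that matrix asymptotically in inverse powers of $2M$ using the concentration of the prototype pulse in frequency together with the smoothness of $\mat{H}$ and $\mat{A}$, and (iii) form the second-order moment to obtain $P_{d}(m)$. For step (i), substituting the channel model and the demodulation/decoding rule into the definition of $\vect{x}_{l_0,m_0}$ and discarding the noise (already accounted for by the second term of \eref{eq:asymptotic_MSE}) gives $\vect{x}_{l_0,m_0}=\sum_{l,m}\mat{G}^{l,m}_{l_0,m_0}\vect{d}_{l,m}$ with
\[
\mat{G}^{l,m}_{l_0,m_0}=\frac{1}{2\pi}\int_{0}^{2\pi}\mat{B}(\omega_{m_0})\,\mat{H}(\omega)\,\mat{A}(\omega_{m})\,\hat{p}_{l,m}(\omega)\,\hat{q}_{l_0,m_0}^{*}(\omega)\,d\omega,
\]
where $\hat{p}_{l,m}$ and $\hat{q}_{l_0,m_0}$ are the discrete-time Fourier transforms of the modulated, time-shifted prototype pulses; by $\mathbf{(As2)}$, up to constant phase and scaling, each factorizes into a frequency-shifted kernel $\hat{p}(\omega-\omega_{m})$, an OQAM phase $j^{l+m}$ and a linear-phase term in $lM$.

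Step (ii) is the technical core and follows the asymptotic machinery of \cite{mestre13tsp,Mestre16}. Because the analog prototype $p(t)$ is smooth with vanishing endpoint derivatives, $\hat{p}$ concentrates near the origin as $M\to\infty$, so in the integral above only the neighbours $m\in\{m_0,m_0\pm1\}$ and a bounded set of indices $l$ contribute up to order $(2M)^{-2}$, and over the relevant range $\mat{H}(\omega)$ and $\mat{A}(\omega_{m})$ may be Taylor-expanded about $\omega_{m_0}$ to second order. Carrying out the expansion and evaluating the resulting pulse cross-correlation sums by means of the perfect-reconstruction and (anti)symmetry properties of $p$ yields, for each pair $(l,m)$, an expansion of $\mat{G}^{l,m}_{l_0,m_0}$ whose coefficients are finite combinations of pulse moments that collapse into the constants $\eta_{1010}^{(+,-)}$ and $\eta_{0011}^{(-,+)}$ defined in Appendix~\ref{sec:proof_distortion}; all contributions of order $(2M)^{-1}$ cancel thanks to (anti)symmetry, which is why no such term survives. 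For $(l,m)=(l_0,m_0)$ the leading term is $\mat{BHA}$; for the neighbouring pairs the zeroth-order term is the purely imaginary intrinsic interference $\pm j\,\beta_{l,m}\mat{BHA}$, and the first surviving corrections involve $\mat{H}'\mat{A}$, $(\mat{HA})'$, $(\mat{HA}')'$ and $\mat{H}''\mat{A}$.

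For step (iii), write $\hat{\vect{d}}_{l_0,m_0}-\vect{d}_{l_0,m_0}=\Re\{(\mat{G}^{l_0,m_0}_{l_0,m_0}-\mat{I})\vect{d}_{l_0,m_0}\}+\sum_{(l,m)\neq(l_0,m_0)}\Re\{\mat{G}^{l,m}_{l_0,m_0}\vect{d}_{l,m}\}$; by $\mathbf{(As3)}$ the cross-terms in $2\,\mathbb{E}\|\cdot\|^{2}$ vanish and each surviving term contributes $P_{s}$ times a trace $\tr[\Re(\mat{G})\Re^{T}(\mat{G})]$. The real-part error of the direct path combines with the squared zeroth-order intrinsic interference summed over the neighbours, which by the perfect-reconstruction property of $p$ reconstitutes the full Hermitian form $P_{s}\tr[(\mat{BHA-I})(\mat{BHA-I})^{H}]$ (line~1, comprising both the $\Re\Re^{T}$ and the $\Im\Im^{T}$ pieces). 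The remaining contributions, of order $(2M)^{-2}$, come from squaring the first-order selectivity corrections and from their cross-terms with $\mat{BHA-I}$ and with the intrinsic interference; here the OQAM phases $j^{l+m}$ together with the real-part operation are decisive, since part of the interference enters through the full matrix — yielding the $\mat{BH}'\mat{A}$ and $\mat{BH}''\mat{A}$ traces of lines~2--3 — while the rest survives only through imaginary parts, yielding the $\Im(\cdot)\,\Im^{T}(\cdot)$ traces of lines~4--5; summing the pulse moments over the neighbouring indices produces the prefactors $2\eta_{1010}^{(+,-)}/(2M)^{2}$ and $4(\eta_{1010}^{(+,-)}+\eta_{0011}^{(-,+)})/(2M)^{2}$, and everything decaying strictly faster than $M^{-2}$ is absorbed into the remainder. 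The new feature relative to the ZF result of \cite{Mestre16} is precisely that $\mat{BHA}=\mat{I}$ is not imposed, so $\mat{BHA-I}$ is retained throughout and reappears in the cross-terms of lines~3--5.

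The step I expect to be the main obstacle is the second-order evaluation of the transmultiplexer response while correctly bookkeeping the OQAM phases and the real-part operation: for each neighbouring pair $(l,m)$ one must decide which derivative combinations of $\mat{H}$ and $\mat{A}$ appear, at which order in $1/(2M)$, and whether they enter through the full matrix or only through its imaginary part, and then check that the accompanying pulse cross-correlation sums reduce exactly to $\eta_{1010}^{(+,-)}$ and $\eta_{0011}^{(-,+)}$. This is where $\mathbf{(As2)}$ — smoothness of $p(t)$, vanishing endpoint derivatives, and (anti)symmetry together with perfect reconstruction — is used both to discard boundary and higher-order terms and to guarantee that the expansion is valid to the claimed order.
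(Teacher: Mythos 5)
Your overall strategy --- expand the noiseless transmultiplexer response to second order in $1/(2M)$ using the smoothness of $\mat{H}$ and $\mat{A}$, then form the second moment using the independence of the symbols --- is the same skeleton as the paper's proof, which packages the expansion as Lemma~\ref{lemma_Conversion_order} (quoted from \cite{Mestre16}) applied with $R=2$ and then evaluates the moments of the pseudo-demodulated vectors $\vect{y}_{l_0,m_0}^{p^{(r)},q^{(s)}}$ through the quantities $\eta_{mnrs}^{(\pm,\mp)}$. The genuine gap is in what you dismiss as ``evaluating the resulting pulse cross-correlation sums \ldots{} yields \ldots{} constants that collapse into $\eta_{1010}^{(+,-)}$ and $\eta_{0011}^{(-,+)}$.'' When the second moment is actually computed, the $O(M^{-2})$ part of the distortion involves \emph{six} distinct pulse quantities, $\eta_{1010}^{(+,-)}$, $\eta_{2000}^{(+,-)}$, $\eta_{0011}^{(+,-)}$, $\eta_{0011}^{(-,+)}$, $\eta_{1001}^{(+,-)}$ and $\eta_{1001}^{(-,+)}$, multiplying six structurally different traces (see $\xi_{2,m}$ in (\ref{eq:pe1_exact})). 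The specific form of the theorem --- in particular the fact that the $\mat{BH}''\mat{A}$ term carries the \emph{same} coefficient $2\eta_{1010}^{(+,-)}$ as the $\mat{BH}'\mat{A}$ term, and that lines 4--5 share the coefficient $4(\eta_{1010}^{(+,-)}+\eta_{0011}^{(-,+)})$ --- rests on the system of identities (\ref{eq:rel_eta_1})--(\ref{eq:rel_eta_4}). Two of these are exact consequences of symmetry and PR (Lemma~\ref{lemma:etaplusminus_minusplus}), but the other two, e.g.\ $\eta_{0011}^{(+,-)}-\eta_{2000}^{(+,-)}=O(M^{-2})$, are only \emph{asymptotic} and require the integration-by-parts machinery on the analog waveform sections (Lemma~\ref{lemma:eta_to_integral}, Proposition~\ref{proposition:etas_asym_deriv}, Corollary~\ref{corollary:etas_asym_deriv_PR} and Lemma~\ref{lemma:etas_asym_deriv_PR}), which occupies all of Appendix~\ref{sec:eta_properties}. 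Symmetry and PR alone do not produce these relations, so your argument as written would terminate at a six-coefficient expression rather than the stated two-coefficient one.

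Two smaller points. First, the vanishing of the $O(M^{-1})$ cross term is not a pure (anti)symmetry cancellation: it needs both the orthogonality $\mathbb{E}[\Re(\vect{y}^{p^{(m)},q^{(n)}})\Im^{T}(\vect{y}^{p^{(r)},q^{(s)}})]=\mat{0}$ and the fact that $\eta_{0001}^{(\pm,\mp)}=0$, the latter following from Lemma~\ref{lemma:exact_PR} (PR of $(p,p)$ combined with the opposite symmetry of $p$ and $p'$). Second, your restriction to $m\in\{m_0,m_0\pm 1\}$ via ``concentration of $\hat p$'' is a heuristic that $\mathbf{(As2)}$ does not actually license at the required order; the paper avoids this by never localizing in the subcarrier index and instead aggregating all $(l,m)$ contributions inside the $\vect{y}$ vectors, whose exact second-order statistics are what the $\eta$ quantities encode.
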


\begin{proof}
	See Appendix \ref{sec:proof_distortion}.
\end{proof}

\textit{Comments}: equations (\ref{eq:asymptotic_MSE}) and (\ref{eq:asymptotic_distortion}) show that the MSE expression is composed of many terms including the effects of noise, MSI, ISI and ICI. One may recognize some usual terms, i.e., the noise term in (\ref{eq:asymptotic_MSE}) or the first term of (\ref{eq:asymptotic_distortion}) related to the fact that the channel is not perfectly inverted ($\mat{BHA}\neq \mat{I}_{S}$). Those two terms would be the only ones remaining if the channel was frequency flat ($\mat{H}'=\mat{0}$) and the precoder non-frequency selective ($\mat{A}'=\mat{0}$). Those are also the two only terms of distortion in an OFDM system if the cyclic prefix is longer than the channel length and the system well synchronized. Furthermore, the dependence of (\ref{eq:asymptotic_distortion}) on $\mat{H}'$ and $\mat{H}''$ comes directly from the fact that the channel variation breaks the FBMC-OQAM orthogonality while the dependence in the derivatives of the precoder $\mat{A}'$ and $\mat{A}''$ shows that the precoding operations on adjacent subcarriers may influence the MSE at the current subcarrier. Notice that this effect, also known as intrinsic interference, also occurs if the channel is non-varying ($\mat{H}'=\mat{0}$) but the precoder varies over the subcarriers  ($\mat{A}'\neq \mat{0}$). 

\section{Linear Precoder and Decoder Design for a MU MIMO system}
\label{section_optimal_precoder_decoder}

\begin {table*}[t!]
\caption {Summary of the different designs under consideration and their respective assumptions.} \label{summary_designs} 
\centering
\def\arraystretch{1.3}
\begin{tabular}{ |c | c | c| }
	\hline & Decoder (Uplink, MAC channel) & Precoder (Downlink, BC channel) \\ 
	& $\mat{A}_\mathrm{UL}=\xi_\mathrm{UL} \mat{I}_{N_U}$, $\mat{H}_\mathrm{UL} \in \mathbb{C}^{N\times N_U}$ & $\mat{B}_\mathrm{DL}=\xi_\mathrm{DL} \mat{I}_{N_U},\mat{H}_\mathrm{DL} \in \mathbb{C}^{N_U\times N}$ \\\hline
	ZF ($\mat{B}\mat{H}\mat{A}=\mat{I}_{N_U}$) & $\xi_\mathrm{UL}^{\mathrm{ZF}}$ independent of $\omega$ & $\xi_\mathrm{DL}^{\mathrm{ZF}}$ depends on $\omega$\\ 		\hline  
	MMSE ($\mat{B}\mat{H}\mat{A}\neq\mat{I}_{N_U}$)& $\xi_\mathrm{UL}^{\mathrm{MMSE}}$ independent of $\omega$ & $\xi_\mathrm{DL}^{\mathrm{MMSE}}$ depends on $\omega$ \\
	\hline  
\end{tabular}
\end {table*}

The goal of this section is to optimize the general MSE formulation of (\ref{eq:asymptotic_MSE}), applied to a MU MIMO scenario, as a function of the linear precoding or decoding matrices. As shown in \fref{fig:syst_model}, we consider a MU MIMO system with one base station (BS) equipped with $N$ antennas and $N_U$ users, each one equipped with a single antenna and not able to cooperate with each other\footnote{{\color{black}Note that the approach could be generalized to the case where each user terminal is equipped with multiple antennas, although the extension does not seem trivial.}}. {\color{black} The users and the BS are assumed to use SDMA, i.e., they communicate simultaneously using the same time and frequency resources}. The number of streams is equal to $S=N_U$ with $N\geq N_U$. The channel frequency response matrix $\mat{H}(\omega)$ is assumed to be perfectly known by the BS. For the sake of clarity, $\mat{H}(\omega)$ is denoted by $\mat{H}_{\mathrm{DL}}(\omega) \in \mathbb{C}^{N_U \times N}$ when referred to the specific DL scenario and $\mat{H}_{\mathrm{UL}}(\omega) \in \mathbb{C}^{N \times N_U}$ resp. in the UL case. 

While optimizing the decoder in UL (resp. precoder in DL), the precoder (resp. decoder) at the other end is fixed to a real positive power normalization scalar $\xi(\omega)$ since the users cannot collaborate. In the following, a per-subcarrier total transmit power $P_T$ constraint is considered. 
\begin{align}
\label{constraint_power}
\tr \left[\mat{A}\mat{A}^H\right]&=P_T.
\end{align}
Two design criteria will be investigated, namely the ZF criterion and the MMSE criterion. {\color{black}A summary of the different designs under study with their corresponding assumptions is given in Table~\ref{summary_designs}. Note that in UL, the users cannot not precode the streams, so that $\xi_\mathrm{UL}(\omega)$ is frequency non-selective. Conversely, in the DL, the BS pre-equalizes the channel at the subcarrier level. This processing depends on the channel frequency response at this subcarrier and hence, the normalization factor $\xi_\mathrm{DL}(\omega)$ will generally depend on frequency.} Finally, the computation complexity of the proposed designs will be studied.

\subsection{Zero Forcing Design}
For this design, a channel inverting constraint is considered, namely
\begin{align}
	\label{constraint_inversion}
	\mat{B}\mat{H}\mat{A}&=\mat{I}_{N_U}.
\end{align}
The channel matrix $\mat{H}$ is assumed full rank, which is a quite natural assumption in the considered MU MIMO scenario. Using (\ref{constraint_inversion})  and the fact that $(\mat{B}\mat{H}\mat{A})'=\mat{B}(\mat{HA})'+\mat{B}'\mat{HA}=\mat{0}$, many terms of the distortion expression of \eref{eq:asymptotic_distortion} vanish and the MSE in (\ref{eq:asymptotic_MSE}) simplifies to
\begin{align}
\label{MSE_ZF}
\MSE(m)&=\alpha\tr \left[\left(\mat{B}\mat{H}'\mat{A}\right)\left(\mat{B}\mat{H}'\mat{A}\right)^H\right] \nonumber\\
&-(2 \alpha +2\beta) \tr \left[\Im(\mat{B}\mat{H}\mat{A}')\Im(\mat{B}'\mat{H}\mat{A})^T\right]\nonumber\\
&+N_0 \tr \left[\mat{B}\mat{B}^H\right] + O\left(2M^{-2}\right).
\end{align}
where $\alpha=\frac{2\eta_{1010}^{(+,-)}}{\left(  2M\right)  ^{2}}$, $\beta=\frac{2\eta_{0011}^{(-,+)}}{\left(  2M\right)  ^{2}}$.

\subsubsection{Linear Decoder (Multiple Access Channel (MAC), Uplink)}
In the UL case, $\mat{H}_{\mathrm{UL}}$ and $\mat{H}_{\mathrm{UL}}' \in \mathbb{C}^{N\times N_U}$ correspond to the tall channel frequency response matrix and its derivative evaluated at the subcarrier of interest. From the power normalization \eref{constraint_power} and channel inversion \eref{constraint_inversion} constraints, the general solution of the problem can be written in the following form
\begin{align}
	\mat{A}^{\mathrm{ZF}}_{\mathrm{UL}}&=\xi^{\mathrm{ZF}}_{\mathrm{UL}}\mat{I}_{N_U}\nonumber \\
	\mat{B}^{\mathrm{ZF}}_{\mathrm{UL}}&=\frac{1}{\xi^{\mathrm{ZF}}_{\mathrm{UL}}}\left( \mat{H}^\dagger + \tilde{\mat{B}}\mat{P}_{\mathrm{UL}} \right) \nonumber
\end{align}
where $\mat{H}^\dagger_{\mathrm{UL}}=(\mat{H}_{\mathrm{UL}}^H\mat{H}_{\mathrm{UL}})^{-1}\mat{H}_{\mathrm{UL}}^H$, $\mat{P}_{\mathrm{UL}}=\mat{I}_{N}-\mat{H}_{\mathrm{UL}}\mat{H}^\dagger_{\mathrm{UL}}$, $\xi^{\mathrm{ZF}}_{\mathrm{UL}}={\sqrt{P_T/N_U}}$ and where $\tilde{\mat{B}}$ is a $N_U \times N$ matrix to be optimized. This shows that the decoder can be written as the left pseudo-inverse of the channel plus a matrix lying on the left null space of $\mat{H}_{\mathrm{UL}}$. In the trivial case $N=N_U$, the decoder is the inverse of the channel since there are no extra degrees of freedom. One can check that the second term of the distortion in \eref{MSE_ZF} is null due to the fact that $\Im(\mat{B}^{\mathrm{ZF}}_{\mathrm{UL}}\mat{H}_{\mathrm{UL}}(\mat{A}^{\mathrm{ZF}}_{\mathrm{UL}})')=\Im\left( {(\xi^{\mathrm{ZF}}_{\mathrm{UL}})'}/{\xi^{\mathrm{ZF}}_{\mathrm{UL}}}\right) \mat{I}_{N_U}=\vect{0}$ with $\xi^{\mathrm{ZF}}_{\mathrm{UL}}$ purely real and frequency non-selective. Therefore, the optimization problem can be turned into the minimization of a quadratic expression in $\tilde{\mat{B}}$
\begin{align} \label{min_prob_decoder}
	\min_{\tilde{\mat{B}}}& \ \alpha \ \tr\left[\left(\mat{H}^\dagger_{\mathrm{UL}}\mat{H}_{\mathrm{UL}}'+\tilde{\mat{B}}\mat{P}_{\mathrm{UL}}\mat{H}_{\mathrm{UL}}'\right)\left(\mat{H}^\dagger_{\mathrm{UL}}\mat{H}_{\mathrm{UL}}'+\tilde{\mat{B}}\mat{P}_{\mathrm{UL}}\mat{H}_{\mathrm{UL}}'\right)^H\right] +\frac{N_0 N_U}{P_T} \tr \left[\left(\mat{H}_{\mathrm{UL}}^H\mat{H}_{\mathrm{UL}}\right)^{-1} +\tilde{\mat{B}}\mat{P}_{\mathrm{UL}} \tilde{\mat{B}}^H\right].
\end{align}
Setting the derivative of this expression with respect to $\tilde{\mat{B}}^*$ to $\mat{0}$, we find that the optimum solution is such that
\begin{align} \label{null_space_projection}
	\tilde{\mat{B}}
	&=-\mat{H}^\dagger_{\mathrm{UL}}\mat{H}_{\mathrm{UL}}'\left(\mat{H}_{\mathrm{UL}}'^H\mat{P}_{\mathrm{UL}}\mat{H}_{\mathrm{UL}}'+\frac{N_0N_U}{P_T\alpha}\mat{I}_{N_U}\right)^{-1} \mat{H}_{\mathrm{UL}}'^H 
\end{align}
where we used the matrix inversion lemma.

\subsubsection{Linear Precoder (Broadcast Channel (BC), Downlink)}
In the DL case, $\HDL, \HDL' \in \mathbb{C}^{N_U\times N}$ denote the fat channel frequency response matrix and its derivative evaluated at the subcarrier of interest. From the constraints \eref{constraint_power} and \eref{constraint_inversion}, the general solution can be written as
\begin{align*}
	\mat{A}^{\mathrm{ZF}}_{\mathrm{DL}}&=\frac{1}{\xi^{\mathrm{ZF}}_{\mathrm{DL}}} \left( \mat{H}^\dagger_{\mathrm{DL}} + \mat{P}_{\mathrm{DL}}\tilde{\mat{A}} \right) \nonumber\\
	\mat{B}^{\mathrm{ZF}}_{\mathrm{DL}}&=\xi^{\mathrm{ZF}}_{\mathrm{DL}}\mat{I}_{N_U}
\end{align*}
where $\mat{H}^\dagger_{\mathrm{DL}}=\HDL^H(\HDL\HDL^H)^{-1}$, $\mat{P}_{\mathrm{DL}}=\mat{I}_{N}-\mat{H}^\dagger_{\mathrm{DL}}\HDL$ and $\xi^{\mathrm{ZF}}_{\mathrm{DL}}=\sqrt{\tr\left((\HDL\HDL^H)^{-1}+\tilde{\mat{A}}^H\mat{P}_{\mathrm{DL}}\tilde{\mat{A}}\right)/P_T}$. As in the decoder case, the second term of the distortion in \eref{MSE_ZF} also disappears due to the fact that $\Im((\mat{B}^{\mathrm{ZF}}_{\mathrm{DL}})'\HDL\mat{A}^{\mathrm{ZF}}_{\mathrm{DL}})=\Im({(\xi^{\mathrm{ZF}}_{\mathrm{DL}})'}/{\xi^{\mathrm{ZF}}_{\mathrm{DL}}})\mat{I}_{N_U}=\vect{0}$ with $\xi$ purely real. The optimization problem then simplifies to
\begin{align*}
	\min_{\tilde{\mat{A}}} & \ \alpha \ \tr \left[\left(\HDL'\mat{H}^\dagger_{\mathrm{DL}} +  \HDL'\mat{P}_{\mathrm{DL}} \tilde{\mat{A}}  \right)\left(\HDL'\mat{H}^\dagger_{\mathrm{DL}} + \HDL'\mat{P}_{\mathrm{DL}} \tilde{\mat{A}} \right)^H\right]+\frac{N_0 N_U}{P_T}\tr\left[(\HDL\HDL^H)^{-1}+\tilde{\mat{A}}^H\mat{P}_{\mathrm{DL}}\tilde{\mat{A}}\right],
\end{align*}
the solution of which is, after applying matrix inversion lemma,
\begin{align*}
	\tilde{\mat{A}}&=- \HDL^{'H} \left( \HDL'\mat{P}_{\mathrm{DL}} \HDL^{'H}+ \frac{N_0 N_U}{P_T\alpha} \mat{I}_{N_U} \right)^{-1} \HDL'\mat{H}^\dagger_{\mathrm{DL}}. 
\end{align*}
One can check that the asymptotic MSE of the optimized precoder and decoder will be exactly the same if the channels are the Hermitian of one another, \emph{i.e.} $\HDL=\mat{H}_{\mathrm{UL}}^H$.

\subsubsection{Asymptotic Study at Low and High SNR}
\label{asymptotic_anal}
We concentrate here on the behavior of the optimized linear decoder in the UL (MAC) channel. Similar conclusions also hold for the precoder in the DL (BC) channel. We assume that the number of users $N_U$ and the transmit power $P_T$ remain constant while we let $N_0$ go to $0$ or $+\infty$ (high and low SNR respectively). At low SNR ($N_0 \rightarrow +\infty$), the expression in \eref{null_space_projection} tends to zero ($\tilde{\mat{B}}\rightarrow \mat{0}$) and the optimized decoder converges to 
\begin{align*}
	\lim_{N_0\rightarrow \infty} \mat{B}^{\mathrm{ZF}}_{\mathrm{UL}}&= \frac{1}{\xi^{\mathrm{ZF}}_{\mathrm{UL}}}\mat{H}^\dagger_{\mathrm{UL}}.
\end{align*}
As one would expect, when noise power is large, the distortion caused by channel selectivity is comparatively negligible. The best thing to do is to use the classical pseudo-inverse of the channel to combine the signals of each antenna.

At high SNR, assuming that $\mat{H}_{\mathrm{UL}}'$ is of full rank $N_U$, the decoder converges to a limit that depends on the rank of $\mat{P}_{\mathrm{UL}}$. {\color{black}Indeed,  two cases must be considered depending on whether matrix $\mat{H}_{\mathrm{UL}}'^H\mat{P}_{\mathrm{UL}}\mat{H}_{\mathrm{UL}}'$ is invertible or not.} One can rewrite $\mat{P}_{\mathrm{UL}}$ as a function of the singular value decomposition (SVD) of $\mat{H}_{\mathrm{UL}}$
\begin{align*}
\mat{H}_{\mathrm{UL}}=\begin{bmatrix}
\mat{U}_1 & \mat{U}_2
\end{bmatrix}\begin{bmatrix}
\mat{\Sigma}_{N_U\times N_U}& \mat{0}_{N-N_U\times N_U}^H
\end{bmatrix}^H \mat{V}^H.
\end{align*}
We then find $\mat{P}_{\mathrm{UL}}=\mat{U}_2\mat{U}_2^H$ where $\mat{U}_2$ is the $N\times N-N_U$ matrix composed of the $N-N_U$ left singular vectors of $\mat{H}_{\mathrm{UL}}$ associated to its zero singular values. It is then straightforward to see that the rank of $\mat{P}_{\mathrm{UL}}$ is the dimension of the left null space of $\mat{H}_{\mathrm{UL}}$,\textit{i.e.} $N-N_U$. First, if $N-N_U\geq N_U$, matrix $\mat{H}_{\mathrm{UL}}'^H\mat{P}_{\mathrm{UL}}\mat{H}_{\mathrm{UL}}'$ is full rank and the limit becomes
\begin{align*}
	\lim_{N_0\rightarrow 0}& \tilde{\mat{B}}= -\mat{H}^\dagger_{\mathrm{UL}}\mat{H}_{\mathrm{UL}}'\left(\mat{H}_{\mathrm{UL}}'^H\mat{P}_{\mathrm{UL}}\mat{H}_{\mathrm{UL}}'\right)^{-1} \mat{H}_{\mathrm{UL}}'^H.
\end{align*}
Replacing this expression of $\tilde{\mat{B}}$ into \eref{min_prob_decoder}, it can be seen that the limit of the asymptotic MSE at high SNR will tend to zero. This means that for twice as many antennas as the number of served users, we can completely remove the first order approximation of the distortion caused by channel frequency selectivity.

As for the case $N-N_U < N_U$, using the fact that $\mat{P}_{\mathrm{UL}}=\mat{U}_2\mat{U}_2^H$, one can reapply the matrix inversion lemma on $\tilde{\mat{B}}\mat{P}_{\mathrm{UL}}$ in order to show that the limit becomes
\begin{align*}
	&\lim_{N_0\rightarrow 0} \tilde{\mat{B}}\mat{P}_{\mathrm{UL}}= - \mat{H}^\dagger_{\mathrm{UL}}\mat{H}_{\mathrm{UL}}'\mat{H}_{\mathrm{UL}}'^{H}\mat{U}_2\left(\mat{U}_2^H\mat{H}_{\mathrm{UL}}'\mat{H}_{\mathrm{UL}}'^{H}\mat{U}_2   \right)^{-1}\mat{U}_2^H.
\end{align*}
In this case, the noise term of the MSE will tend to zero but the first order approximation of the distortion will only be partially compensated for.

We can conclude that the optimized ZF decoder and precoder can be written in a compact expression as the pseudo-inverse of the channel plus a matrix lying on the null space of the channel. This design can compensate for the degradation due to channel frequency selectivity and even completely remove the first order approximation of the distortion for twice as many BS antennas as the number of served users.


\begin {table*}[t!]
\caption {Complexity of calculating the proposed precoders and decoders.} \label{complexity_design} 
\centering
\def\arraystretch{1.3}
\begin{tabular}{ |c | c | c| }
	\hline & Decoder (Uplink, MAC channel) & Precoder (Downlink, BC channel) \\\hline
	Classical ZF  & $O(2NN_U^2+N_U^3)$ & $O(3NN_U^2+N_U^3)$ \\ 		\hline  
	Opt. ZF  & $O(4NN_U^2+4N^2N_U+2N_U^3)$ & $O(5NN_U^2+4N^2N_U+2N_U^3)$ \\ 		\hline  
	Classical MMSE  & $O(2N^2N_U+N^3)$ & $O(2N^2N_U+NN_U^2+N^3)$  \\ 		\hline  
	Opt. MMSE & $O(5N^2N_U+N^3)$& $O(6N^2N_U+3NN_U^2+2N_U^3+N^3)$ \\
	\hline  
\end{tabular}
\end {table*}

\subsection{Minimum Mean Squared Error Design}
The previous designs rely on a ZF criterion which restricts the solution domain. In the following, we do not make this assumption and we look at the general MMSE design which will achieve an optimized performance. Indeed, for low SNR situations or highly selective subchannels, inverting the channel might strongly degrade the performance. Furthermore, the channel matrix $\mat{H}$ does not generally need to be full rank.

\subsubsection{Linear Decoder (Multiple Access Channel, Uplink)}
For the decoder case, due to the power normalization constraint \eref{constraint_power}, we impose 
\begin{align*}
	\mat{A}^{\mathrm{MMSE}}_{\mathrm{UL}}&=\xi^{\mathrm{MMSE}}_{\mathrm{UL}}\mat{I}_{N_U}\\
	\mat{B}^{\mathrm{MMSE}}_{\mathrm{UL}}&=\frac{1}{\xi^{\mathrm{MMSE}}_{\mathrm{UL}}}\hat{\mat{B}}
\end{align*}where $\xi^{\mathrm{MMSE}}_{\mathrm{UL}}=\sqrt{\frac{P_T}{N_U}}$ and $\hat{\mat{B}}=\xi^{\mathrm{MMSE}}_{\mathrm{UL}}\mat{B}^{\mathrm{MMSE}}_{\mathrm{UL}}$ is defined to clarify the following expressions by suppressing the dependence in $\xi^{\mathrm{MMSE}}_{\mathrm{UL}}$. Hence, the imaginary terms of the distortion in \eref{eq:asymptotic_distortion} again disappear due to $(\mat{A}^{\mathrm{MMSE}}_{\mathrm{UL}})'=\mat{0}$ (the precoder is frequency independent) and the optimization problem takes the following quadratic form in $\hat{\mat{B}}$
\begin{align*}
\min_{\hat{\mat{B}}} \MSE(m)&=\tr \left[(\hat{\mat{B}}\mat{H}_\mathrm{UL}-\mat{I})(\hat{\mat{B}}\mat{H}_\mathrm{UL}-\mat{I})^H\right]\nonumber\\
&+\alpha\tr \left[\left(\hat{\mat{B}}\mat{H}_\mathrm{UL}'\right)\left(\hat{\mat{B}}\mat{H}_\mathrm{UL}'\right)^H\right]\nonumber\\
&+\alpha \Re \tr \left[(\hat{\mat{B}}\mat{H}_\mathrm{UL}-\mat{I}) (\hat{\mat{B}}\mat{H}_\mathrm{UL}'')^H\right] \nonumber\\
&+\frac{N_0N_U}{P_T}\tr \left[\hat{\mat{B}}\hat{\mat{B}}^H\right].
\end{align*}
Setting the derivative of this expression with respect to $\hat{\mat{B}}^*$ to $\mat{0}$ yields the MMSE decoder given by,
\begin{align} 
	&\hat{\mat{B}}=\left(\mat{H}_{\mathrm{UL}}^H+\frac{\alpha}{2}\mat{H}_{\mathrm{UL}}''^H\right)\left(\mat{H}_{\mathrm{UL}}\mat{H}_{\mathrm{UL}}^H+\alpha\mat{H}_{\mathrm{UL}}'\mat{H}_{\mathrm{UL}}'^H+\frac{\alpha}{2}\left( \mat{H}_{\mathrm{UL}}\mat{H}_{\mathrm{UL}}''^H+\mat{H}_{\mathrm{UL}}''\mat{H}_{\mathrm{UL}}^H\right)+\frac{N_0N_U}{P_T}\mat{I}_N  \right)^{-1} 	\label{MMSE_decoder}
\end{align}

\subsubsection{Linear Precoder (Broadcast Channel, Downlink)} In the precoder case, due to the normalization constraint, we have 
\begin{align*}
\mat{A}^{\mathrm{MMSE}}_{\mathrm{DL}}&=\frac{1}{\xi^{\mathrm{MMSE}}_{\mathrm{DL}}}\hat{\mat{A}}\nonumber\\
\mat{B}^{\mathrm{MMSE}}_{\mathrm{DL}}&=\xi^{\mathrm{MMSE}}_{\mathrm{DL}}\mat{I}_{N_U}
\end{align*}
with $\xi^{\mathrm{MMSE}}_{\mathrm{DL}}=\sqrt{\frac{\tr\left[\hat{\mat{A}}\hat{\mat{A}}^H \right]}{P_T}}$. As opposed to all of the previous designs, the imaginary terms of the distortion in (\ref{eq:asymptotic_distortion}) do not cancel out. The optimization of those two terms is difficult due to the dependence in $(\mat{A}^{\mathrm{MMSE}}_{\mathrm{DL}})'$. The derivative implies that the optimization of the precoder of one subcarrier depends on the neighboring subcarriers and the optimization can no longer be done locally at the subcarrier level, which increases the problem complexity and is not comparable to the other designs. Hence, we propose to impose an additional constraint which cancels the imaginary terms of (\ref{eq:asymptotic_distortion}), i.e., $\Im(\HDL\mat{A}^{\mathrm{MMSE}}_{\mathrm{DL}})=\mat{0}$. This somehow means that we have a ZF design on the imaginary part of $\HDL\mat{A}^{\mathrm{MMSE}}_{\mathrm{DL}}$ and a MMSE design on its real part. We then have to minimize the following Lagrangian formulation including the constraint (via the Lagrange multiplier $\mat{\Psi}$)
\begin{align*}
\min_{\hat{\mat{A}}}L&=\tr \left[(\HDL\hat{\mat{A}}-\mat{I})(\HDL\hat{\mat{A}}-\mat{I})^H\right]\nonumber\\
&+\alpha\tr \left[\left(\HDL'\hat{\mat{A}}\right)\left(\HDL'\hat{\mat{A}}\right)^H\right]\nonumber\\
&+\alpha \Re \tr \left[(\HDL\hat{\mat{A}}-\mat{I}) (\HDL^{''}\hat{\mat{A}})^H\right]\nonumber\\
&+\frac{N_0 N_U}{P_T} \tr\left[\hat{\mat{A}}\hat{\mat{A}}^H\right]\nonumber\\
&+j\tr \left[\mat{\Psi}^T(\HDL\hat{\mat{A}}-\HDL^*\hat{\mat{A}}^*)\right].
\end{align*}
Setting the derivative of $L$ with respect to $\hat{\mat{A}}$ to $\mat{0}$ yields
\begin{align}
\hat{\mat{A}}&= \left(\HDL^H\HDL +\alpha \HDL'^H\HDL'+\frac{\alpha}{2} \left(\HDL^H \HDL''+\HDL''^H\HDL\right)+\frac{N_0 N_U}{P_T}\mat{I}_{N}\right)^{-1} \left(j\HDL^H\mat{\Psi}+\HDL^H+\frac{\alpha}{2}\HDL''^H\right), 	\label{MMSE_precoder}
\end{align}
where the value of $\mat{\Psi}$ is fixed thanks to the constraint $\Im(\HDL\mat{A})=\mat{0}$. Denoting $\mat{X}=\HDL^H\HDL +\frac{\alpha}{2} \left[\HDL^H \HDL''+\HDL''^H\HDL\right]+\alpha \HDL'^H\HDL'+\frac{N_0 N_U}{P_T}\mat{I}_{N}$, we find
\begin{align*}
\mat{\Psi}&=-\left( \Re\left(\HDL\mat{X}^{-1}\HDL^H\right)\right) ^{-1}\Im(\HDL\mat{X}^{-1} (\HDL^H+\frac{\alpha}{2}\HDL''^H ) ).
\end{align*}
{\color{black}\subsection{Complexity of computation of the proposed precoders and decoders}

Table~\ref{complexity_design} gives an order of complexity of computing the proposed optimized designs with respect to classical designs. By classical designs, we mean precoders and decoders that rely on the hypothesis of channel frequency flatness at the subcarrier level, i.e., $\mat{H}'(\omega_m)=\mat{H}''(\omega_m)=\mat{0}$. For the calculation, only matrix multiplications and inversions are taken into account given that they are the most complex operations. It is assumed that for general matrices $\mat{D}\in \mathbb{C}^{l \times m},\mat{E} \in \mathbb{C}^{m \times n},\mat{F} \in \mathbb{C}^{m \times m}$, performing matrix multiplication $\mat{DE}$ has complexity $O(lmn)$ and matrix inversion $\mat{F}^{-1}$ has complexity $O(m^3)$. One can check that the calculation complexity of the optimized designs remains similar to the classical. Note that the designs in DL are more complex since they require one more matrix multiplication for the calculation of $\xi_\mathrm{DL}$. Furthermore, the opt. MMSE precoder is slightly more complex than the opt. MMSE decoder due to the required calculation of $\mat{\Psi}$.}

\section{Simulation Results}
\label{section_results}

\begin{figure}[b!]
	\centering
	\includegraphics[width=4.5in]{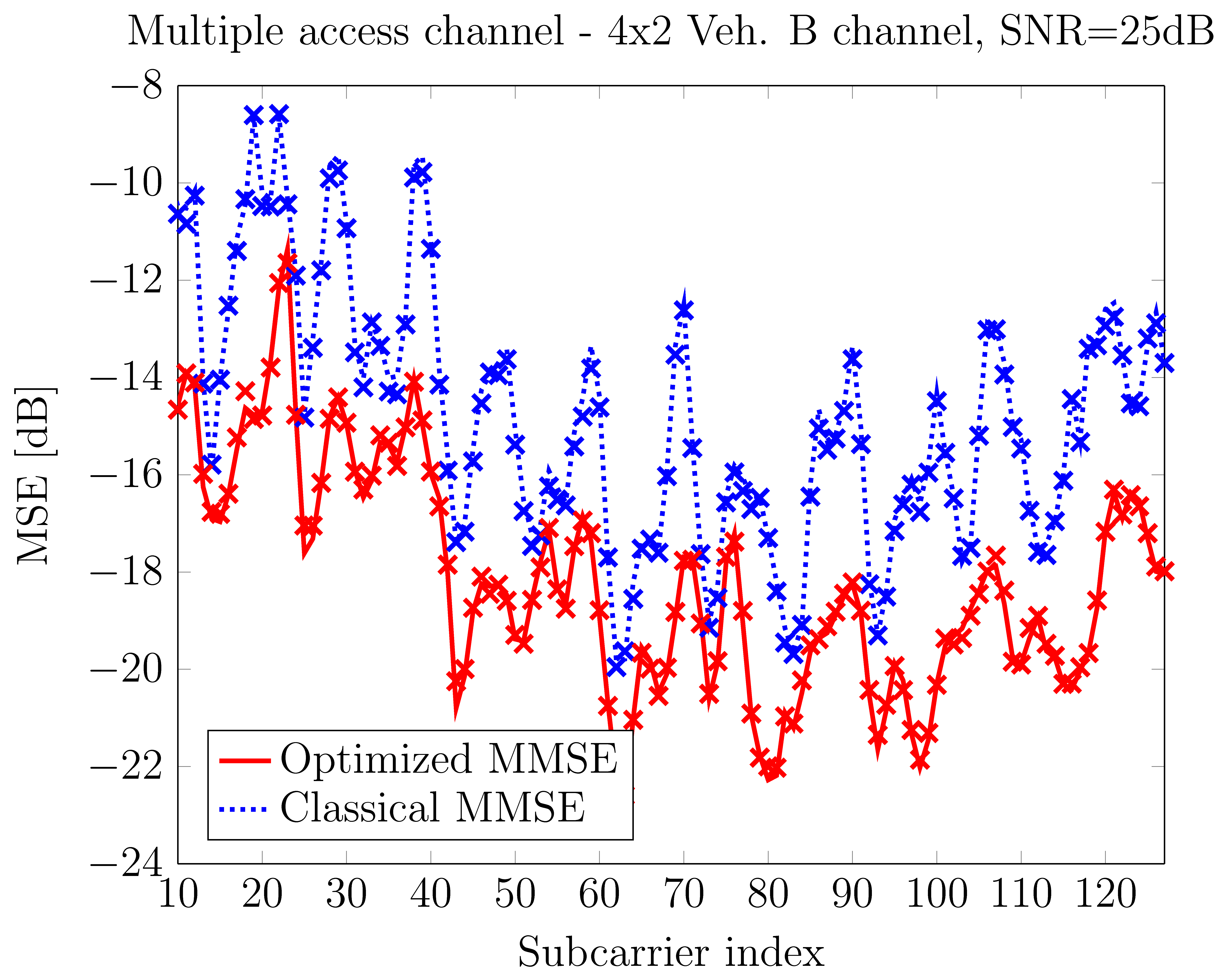}
	\caption{The optimized MMSE decoder clearly outperforms the classical MMSE decoder. The asymptotic approximation of the MSE represented in a solid line matches perfectly the simulated MSE in crosses.}
	\label{opt_decoder_MSE}
\end{figure}

\begin{figure}[t!]
	\centering
			\includegraphics[width=4.5in]{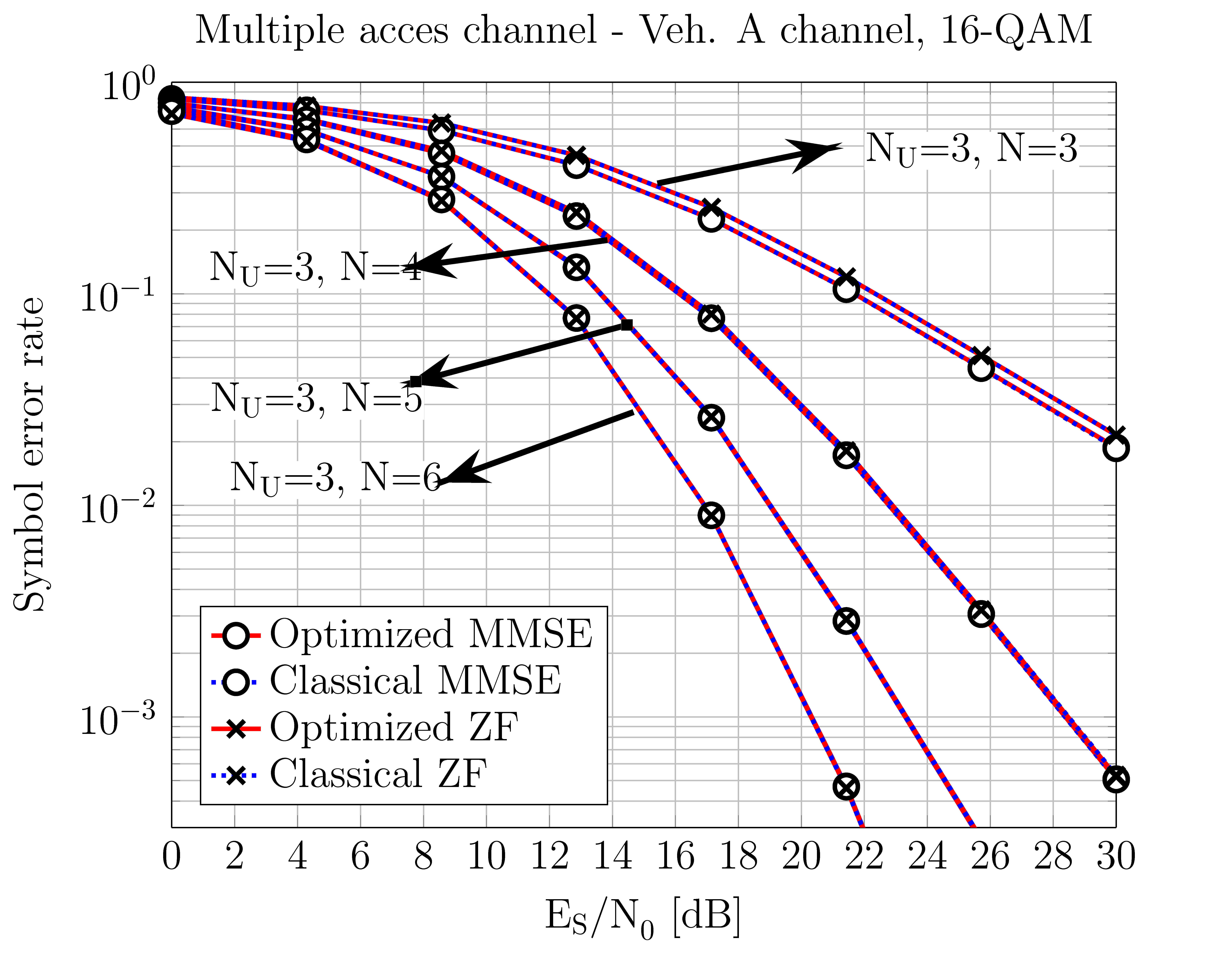}
	\vspace{-1em}
	\caption{SER of the optimized and classical ZF/MMSE decoders and Veh. A channel model.}
	\label{opt_decoder_SER_VehA}
\end{figure}

\begin{figure}[!t]
	\centering
		\includegraphics[width=4.5in]{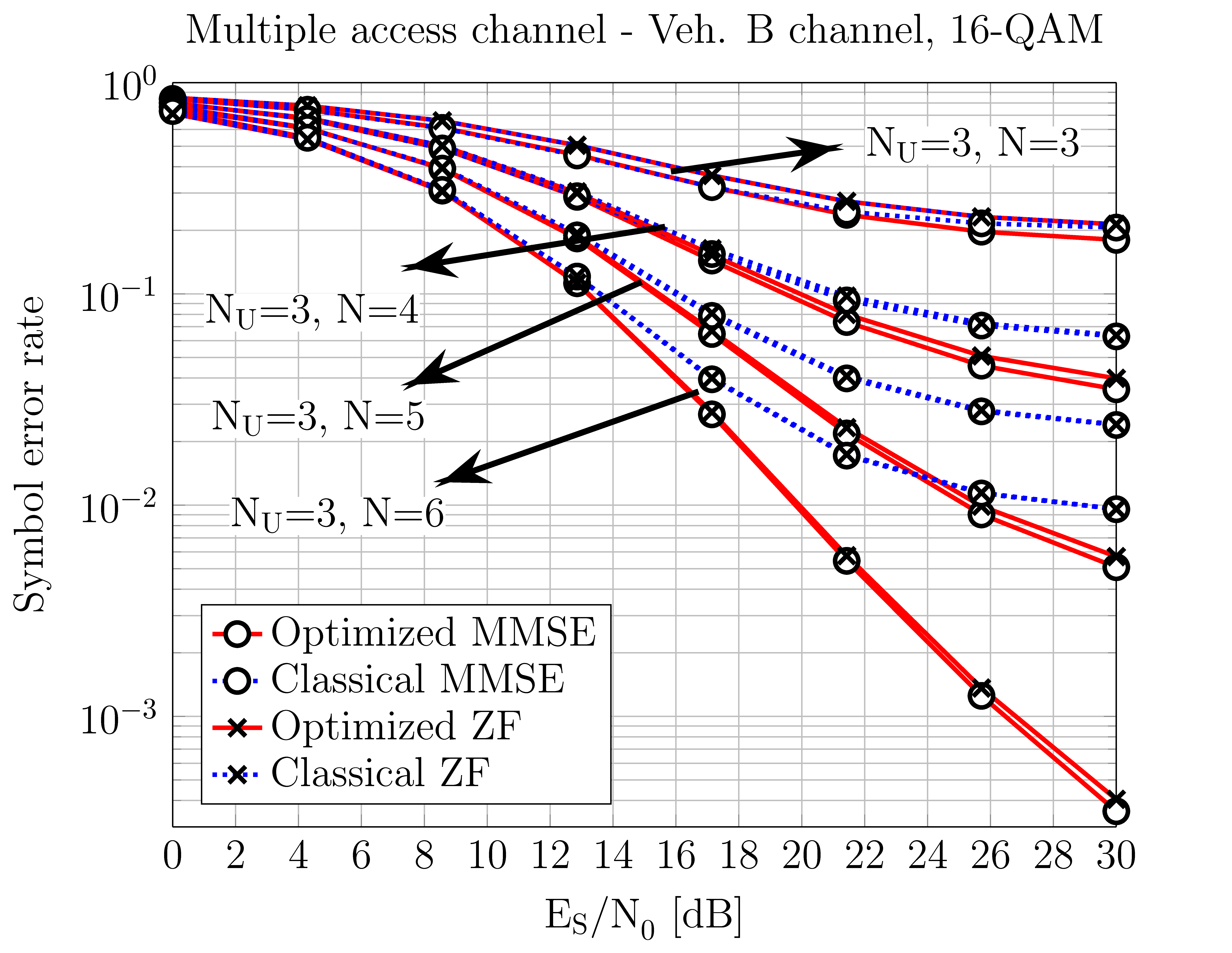}
	\vspace{-1em}
	\caption{SER of the optimized and classical ZF/MMSE decoder and Veh. B
		channel model}
	\label{fig:decoder_MMSE_ZF_16QAM_VehB2}
\end{figure}

The following simulations first aim at demonstrating the accuracy of the derived asymptotic MSE expression in practical situations. Secondly, they validate the performance of the optimized precoders and decoders w.r.t. classical precoder and decoder designs. A FBMC-OQAM system is considered with $2M=128$ subcarriers and subcarrier spacing 15kHz as in LTE. The channels are randomly drawn from the ITU Vehicular A or B channel model, \textit{i.e.} a mildly or highly frequency selective channel respectively. {\color{black} Furthermore, they remain constant during the frame transmission (quasi-static assumption).} The Phydyas prototype pulse with overlapping factor $\kappa=4$ is used in the simulations \cite{bellanger2001specification}. This pulse does not fully satisfy the PR constraints but is of the near-perfect-reconstruction (NPR) type. Given that it almost fulfills PR constraints, the derived MSE expression \eref{eq:asymptotic_MSE} remains a very good approximation of the distortion, as will be shown in the following.

\fref{opt_decoder_MSE} shows the MSE of the classical and optimized MMSE decoders for a specific channel realization. The channel model simulated is the Vehicular B channel. The BS is assumed to have $N=4$ antennas serving $N_U=2$ users and the SNR of the system is 25dB. One can first check that the simulated MSE (in cross markers) perfectly matches the theoretical approximation (in solid line) of \eref{eq:asymptotic_MSE}. Furthermore, in the high SNR regime considered here, the classical MMSE decoder is limited by the distortion induced by the channel frequency selectivity. On the other hand, the optimized MMSE decoder uses the two extra antennas to cancel the distortion, giving a clear gain of performance.

{\color{black} In \fref{opt_decoder_SER_VehA} and \fref{fig:decoder_MMSE_ZF_16QAM_VehB2}, the symbol error rate (SER) for the classical and optimized decoders are plotted for a fixed number of users $N_U=3$ and different number of BS antennas $N$. The signal constellation is a 16-QAM. In \fref{opt_decoder_SER_VehA}, the Veh. A channel model is considered and the classical decoders can achieve the same performance as the optimized ones. This comes from the fact that the assumption of an approximately flat channel inside each subchannel is accurate. On the other hand, in \fref{fig:decoder_MMSE_ZF_16QAM_VehB2}, the Veh. B channel model is considered, i.e., a highly selective channel. The SER saturates very quickly with a classical decoder while the optimized ZF or MMSE decoder can compensate for the distortion as the number of BS antennas $N$ grows and the SER therefore saturates at higher SNR. In the case $N_U=3,N=6$, the SER does not even saturate in the considered SNR range since the BS has twice as many antennas and can completely remove the first order approximation of the distortion. This is in accordance with the asymptotic study at high SNR conducted in \sref{asymptotic_anal}. As expected, the MMSE designs outperform the ZF designs, and this gain is larger for a small number of BS antennas. Indeed, as the number of BS antennas increases, the interference can be better handled and the regularization gain of the MMSE decoder is reduced.
}

\begin{figure}[t!]
	\centering
		\includegraphics[width=4.5in]{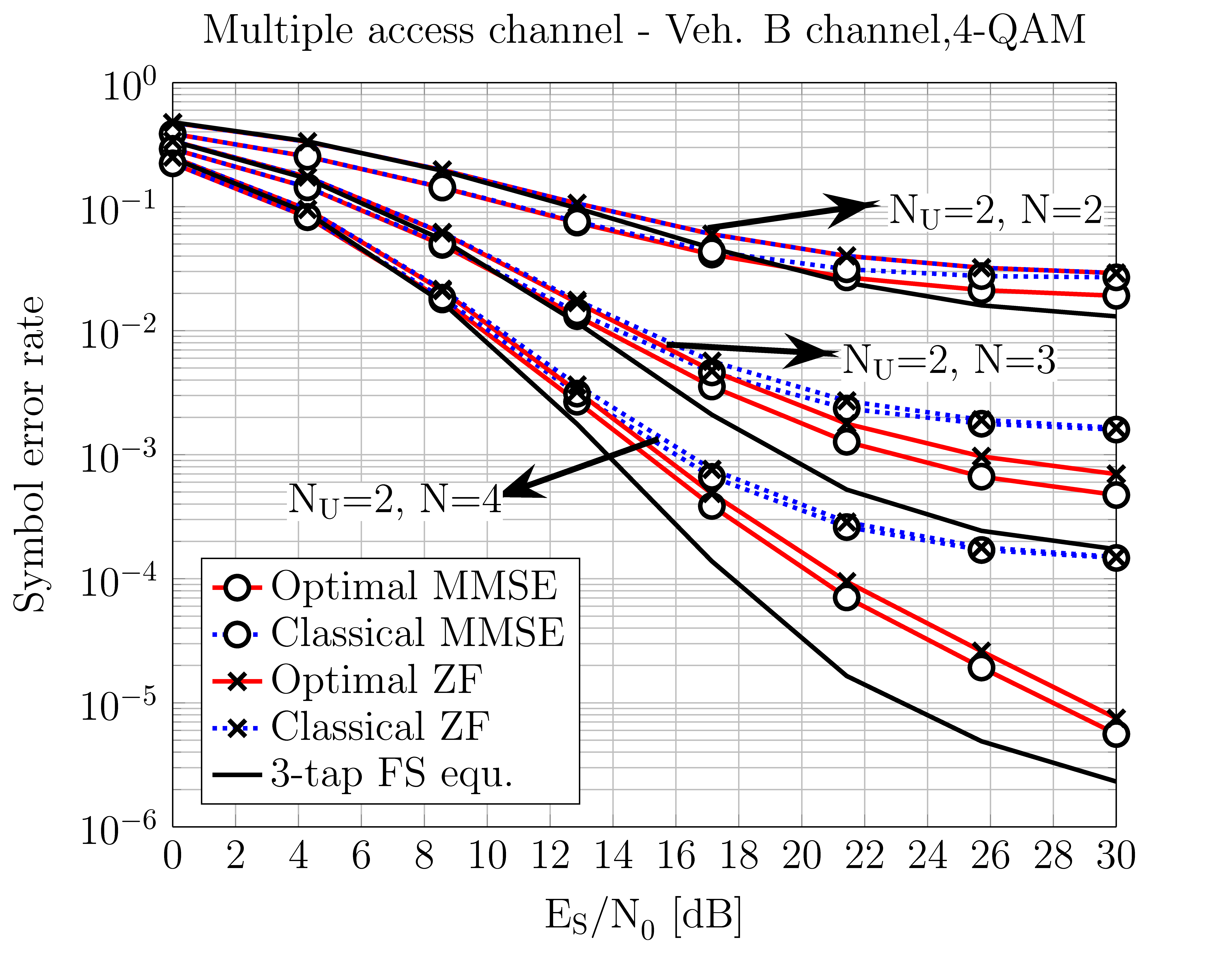}
	\caption{SER of the optimized and classical ZF/MMSE decoder and a 3-tap frequency sampling equalizer.}
	\label{opt_decoder_SER_VehB_MMSE}
\end{figure}

\begin{figure}[t!]
	\centering
		\includegraphics[width=4.5in]{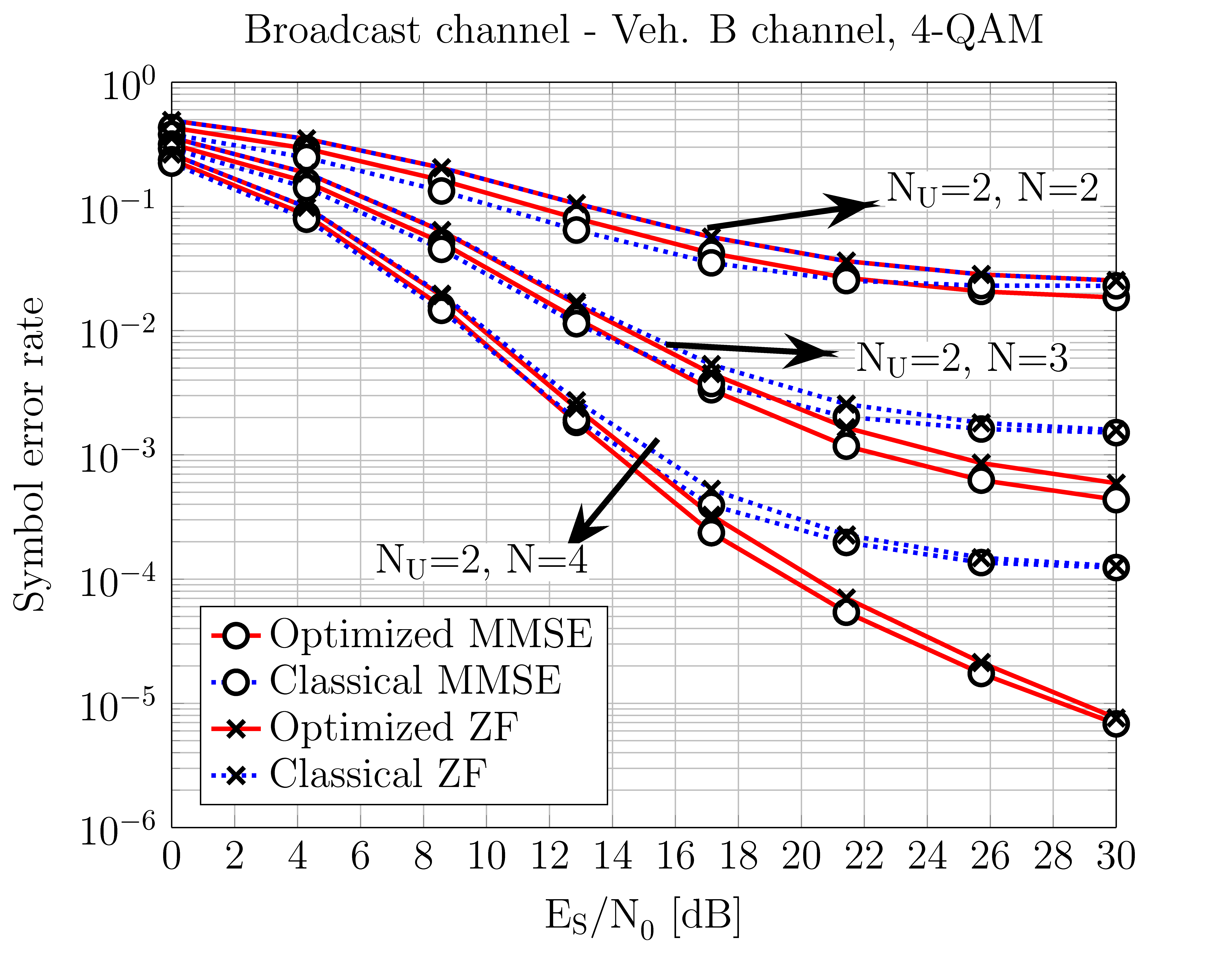}
	\caption{SER of the optimized, classical ZF/MMSE precoder.}
	\label{opt_precoder_SER_VehB_MMSE}
\end{figure}

{\color{black} In \fref{opt_decoder_SER_VehB_MMSE}, a 4-QAM constellation and the Veh. B channel model are considered in the UL. The proposed decoder designs are compared with a 3-tap frequency sampling equalizer that follows the design of \cite{ihalainen2011channel} with target frequency points chosen according to a ZF criterion. One can check that the 3-tap equalizer has a gain of performance relative to the proposed designs for the same antenna configuration. Note however that the multi-tap design has a much larger complexity in terms of hardware implementation and calculation of the equalizer coefficients. Moreover, it adds a reconstruction delay to the demodulation chain.}

{\color{black} \fref{opt_precoder_SER_VehB_MMSE} has exactly the same simulation parameters as \fref{opt_decoder_SER_VehB_MMSE} but for the DL. One can check that the performance of the ZF precoder in DL is similar to the decoder one in UL. Note that the SER performances of the precoder and decoder might differ. Indeed, even though the MSE expression is dual in UL and in DL, the distribution of the per-stream SER might differ due to the correlation of the noise arising in UL but not in DL. Moreover, due to the ZF constraint on $\Im(\HDL\mat{A}^{\mathrm{MMSE}}_{\mathrm{DL}})=0$, the optimized MMSE precoder performs slightly worse than the classical MMSE precoder at low SNR.}

\section{Conclusion}
\label{section_conclusion}

This paper investigated the design of optimized FBMC-OQAM precoders and decoders for a MU MIMO scenario highly selective channel. The asymptotic expression of the MSE of a FBMC transceiver was recalled, simplified and generalized. Optimizing the MSE expression, expressions of the optimized linear precoder and decoder were found under either a ZF criterion or a MMSE criterion. As soon as the BS has more antennas than the number of users, the optimized structures use those extra degrees of freedom to compensate for the distortion induced by channel frequency selectivity. From an asymptotic study at high SNR, it was shown that the first order approximation of the distortion can even be completely removed if the BS has at least twice as many antennas as the number of users. Simulation results have demonstrated the accuracy of the new asymptotic expression of the distortion as well as the performance of the optimized precoders and decoders.


%

\appendices

\section{\label{sec:proof_distortion}Distortion expression}
{\color{black}In this appendix, we are interested only in the intrinsic distortion caused by the FBMC signal itself in the presence of channel frequency selectivity. Since the additive noise samples are assumed to be uncorrelated with the signal of interest, we will assume a noise-free received signal. In order to derive the distortion expression, we will first define different notations and recall one Lemma of \cite{Mestre16}.

We define $\vect{y}_{l_0,m_0}^{p,q} \in \mathbb{C}^{S\times 1}$ as
\begin{align*}
\vect{y}_{l_0,m_0}^{p,q}&=\sum_{l=0}^{2N_s-1} \sum_{m=0}^{2M-1} \vect{d}_{l,m} \sum_{n=0}^{L_q-1} p_{l,m}[n] q^*_{l_0,m_0}[n].
\end{align*}
The symbols $\vect{y}_{l_0,m_0}^{p,q}$ can be seen as the complex demodulated samples at subcarrier $m_0$ and multicarrier symbol $l_0$, before de-staggering, if the real-valued symbols streams $\vect{d}_{l,m}$ are FBMC/OQAM modulated using a prototype pulse $p[n]$ and demodulated using a prototype pulse $q[n]$ and for an ideal channel, i.e., $\mat{H}(\omega)=\mat{I}_S$. Moreover, if $p$ and $q$ are perfect reconstruction (bi-orthogonal) pulses, one will have $\vect{d}_{l_0,m_0}=\Re(\vect{y}_{l_0,m_0}^{p,q})$. 

As detailed in Section \ref{subsect:data_model}, to compensate for the effect of the channel, a single-tap precoding matrix $\mat{A}(\omega)$ and decoding matrix $\mat{B}\left(\omega\right)$ are used, operating at the per-subcarrier level. At the transmitter, the symbols $\vect{d}_{l,m}$ are precoded by matrix $\mat{A}(\omega_m)$. At the receiver, the equalized symbols are denoted by $\vect{x}_{l_0,m_0}^{p,q}=\mat{B}(\omega_{m_0})\vect{z}_{l_0,m_0}^{p,q}$ where $\vect{z}_{l_0,m_0}^{p,q}$ are the demodulated symbols at the receiver before decoding, i.e., 
\begin{align*}
	\vect{z}_{l_0,m_0}^{p,q}&=\sum_{n=0}^{L_q-1} \vect{r}[n] q^*_{l_0,m_0}[n].
\end{align*}
Note that one does not necessarily have $\Re(\vect{x}_{l_0,m_0}^{p,q}) =\vect{d}_{l_0,m_0}$ even if $\mat{B}\left(  \omega\right)  \mat{H}\left(  \omega\right) \mat{A}\left(  \omega\right)  =\mat{I}_S$ (and if no additive noise is present) due to the fact that the FBMC-OQAM orthogonality does not hold anymore if the channel is not exactly flat. However, this will hold approximately if all frequency-depending quantities, $\mat{B}\left(  \omega\right),\mat{H}\left(  \omega\right),\mat{A}\left(  \omega\right)$, are sufficiently flat as functions of $\omega$  provided that $\mat{B}\left(  \omega\right)  \mat{H}\left(  \omega\right) \mat{A}\left(  \omega\right)  =\mat{I}_S$. Our objective is to find an approximate expression for the associated error. This was already done in \cite{Mestre16} but for the special case of ZF (channel inversion), i.e., 
\[
\mat{B}\left(  \omega\right)  \mat{H}\left(  \omega\right)
\mat{A}\left(  \omega\right) =\mat{I}_{S}.
\]
We here extend and greatly simplify the formula derived in \cite{Mestre16} to the general case of non channel inversion, i.e.,
\[
\mat{B}\left(  \omega\right)  \mat{H}\left(  \omega\right)
\mat{A}\left(  \omega\right)  \neq\mat{I}_{S}.
\]

To derive the result of Theorem~\ref{Theorem}, we will use the following result, proven in \cite{Mestre16}. We will basically assume that all frequency depending quantities are smooth functions of $\omega$ and that the prototype pulses are sampled versions of smooth analog waveforms, namely, $\mathbf{(As1)}$-$\mathbf{(As3)}$. 

\begin{lemma}
	\label{lemma_Conversion_order}Under assumptions $\mathbf{(As1)}$%
	-$\mathbf{(As3)}$, we have
	\begin{align*}
		&\vect{z}_{l_0,m_0}^{p,q}\\
		&=\sum_{k=0}^{R}\sum_{k'=0}^{R-k}\frac{(-j)^{k+k'}}{(2M)^{k+k'}k!k'!}\mat{H}^{(k)} \mat{A}^{(k')} \vect{y}_{l_0,m_0}^{p,q^{(k)}}(k')		+ O(M^{-R})
	\end{align*}
	where
	\begin{align*}
		\vect{y}_{l_0,m_0}^{p,q^{(k)}}(k')&=\sum_{r=0}^{k'} \frac{k'!}{r!(k'-r)!} (-1)^r \vect{y}_{l_0,m_0}^{p^{(r)},q^{(k+k'-r)}}
	\end{align*}
	and where $p^{(r)},q^{(r)}$ are the sampled versions of the $r$-th time domain derivatives of the original prototype pulses. Superscript $^{(k)}$ denotes the m-th order derivative of the corresponding frequency dependent quantity, which always is evaluated at subcarrier $m_0$ (here and in the following of the appendix).
\end{lemma}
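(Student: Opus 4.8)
The plan is to reduce the statement to a single frequency-domain overlap and then Taylor-expand the channel and precoder transfer functions about the demodulation frequency $\omega_{m_0}$. First I would substitute the noise-free received signal $\vect{r}[n]=\sum_b \mathcal{H}[b]\vect{s}[n-b]$ into the definition of $\vect{z}_{l_0,m_0}^{p,q}$, exchange the (finite) sums, and isolate, for each pair $(l,m)$, the scalar overlap $\sum_{n}\big(\sum_b \mathcal{H}[b]\,p_{l,m}[n-b]\big)\,q^*_{l_0,m_0}[n]$ premultiplied by $\mat{A}(\omega_m)\vect{d}_{l,m}$. Because $\mathcal{H}[b]$ is the inverse transform of $\mat{H}(\omega)$, Parseval's relation rewrites this overlap as $\tfrac{1}{2\pi}\int_0^{2\pi}\mat{H}(\omega)\,\hat{p}_{l,m}(\omega)\,\hat{q}^{\,*}_{l_0,m_0}(\omega)\,d\omega$, where $\hat{p}_{l,m}$ and $\hat{q}_{l_0,m_0}$ are the transforms of the transmit and receive modulated pulses, narrowly localized around the carriers $\omega_m$ and $\omega_{m_0}$ respectively. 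This brings $\mat{H}(\omega)$ into explicit view and sets the stage for the expansion. (An equivalent purely time-domain route uses the moment identity $\sum_b \mathcal{H}[b](-b)^k e^{-j\omega b}=(-j)^k\mat{H}^{(k)}(\omega)$, but routing the derivatives onto the correct pulse is cleaner in frequency.)

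Next I would Taylor-expand both $\mat{H}(\omega)$ and $\mat{A}(\omega_m)$ about $\omega_{m_0}$, writing $\mat{H}(\omega)=\sum_{k}\tfrac{(\omega-\omega_{m_0})^{k}}{k!}\mat{H}^{(k)}$ and $\mat{A}(\omega_m)=\sum_{k'}\tfrac{(\omega_m-\omega_{m_0})^{k'}}{k'!}\mat{A}^{(k')}$, all derivatives evaluated at $\omega_{m_0}$ (legitimate by $\mathbf{(As1)}$). Since the overlap weights are scalars, the constant derivative matrices commute with them and factor out of the $(l,m)$ sums, exactly reproducing the $\mat{H}^{(k)}\mat{A}^{(k')}$ structure with $\mat{A}^{(k')}$ acting on $\sum_{l,m}\vect{d}_{l,m}(\cdots)$. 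The key algebraic manoeuvre is the exact identity $\omega_m-\omega_{m_0}=(\omega-\omega_{m_0})-(\omega-\omega_m)$, valid for every $\omega$, which lets me replace the constant $(\omega_m-\omega_{m_0})^{k'}$ inside the integral by $\sum_{r}\binom{k'}{r}(-1)^{r}(\omega-\omega_m)^{r}(\omega-\omega_{m_0})^{k'-r}$. The $(\omega-\omega_m)^{r}$ factors then attach to the transmit transform $\hat{p}_{l,m}$ (centred at $\omega_m$) and the remaining $(\omega-\omega_{m_0})^{k+k'-r}$ factors to the receive transform $\hat{q}_{l_0,m_0}$ (centred at $\omega_{m_0}$); the binomial coefficient and the sign $(-1)^{r}$ are precisely those defining $\vect{y}_{l_0,m_0}^{p,q^{(k)}}(k')$.

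The third step is the frequency-to-time-derivative dictionary. Multiplying a modulated pulse's transform by the frequency offset relative to its own carrier is equivalent, in time, to $-j$ times a derivative of its envelope; applying this $k+k'-r$ times on the receive side and $r$ times on the transmit side, and invoking the sampling relation of $\mathbf{(As2)}$ (which ties a continuous derivative to the sampled derivative $p^{(r)}[n]=T_s^r p^{(r)}(\cdot)$ through the factor $1/(2M)$), turns the frequency moments into the sampled derivatives $q^{(k+k'-r)}$ and $p^{(r)}$ and produces the prefactor $(-j)^{k+k'}/(2M)^{k+k'}$, with all conjugation and time-reversal signs tracked via the reversed receive pulse $\tilde{q}$. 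Reassembling, the overlap of the $r$-th derivative transmit pulse against the $(k{+}k'{-}r)$-th derivative receive pulse, summed over $(l,m)$ against $\vect{d}_{l,m}$, is exactly $\vect{y}_{l_0,m_0}^{p^{(r)},q^{(k+k'-r)}}$, and the binomial combination over $r$ collapses to $\vect{y}_{l_0,m_0}^{p,q^{(k)}}(k')$. This yields the stated summand together with its factorials $k!\,k'!$.

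The main obstacle, and the step I would treat most carefully, is the control of the truncation. Cutting both series at total order $k+k'\le R$ leaves a remainder dominated by the order-$(R{+}1)$ terms, each carrying $1/(2M)^{R+1}$; to conclude it is genuinely $O(M^{-R})$ I would bound the surviving frequency moments uniformly using the smoothness of $\mat{H}(\omega)$ and $\mat{A}(\omega)$ on the torus $\mathbf{(As1)}$, the decay of $\hat{p}_{l,m},\hat{q}_{l_0,m_0}$ away from their carriers, and the boundedness of the symbols $\mathbf{(As3)}$, so that the leakage from non-adjacent subcarriers does not accumulate across the $O(N_s M)$ terms of the $(l,m)$ sum. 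A companion point is to check that the integration-by-parts implicit in the derivative dictionary produces no boundary terms, which is guaranteed by $\mathbf{(As2)}$ since $p(t)$ and all its derivatives vanish at $t=\pm T_s\kappa/2$. Granting these bounds, collecting the finitely many surviving terms gives exactly the claimed expansion.
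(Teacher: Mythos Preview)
The paper does not actually prove this lemma: it is stated with the remark ``proven in \cite{Mestre16}'' and is imported wholesale from that reference, so there is no in-paper argument to compare your proposal against.

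That said, your outline is the natural route and matches what one would expect the cited proof to do. The Parseval/DTFT representation of the channel-filtered overlap, the Taylor expansion of $\mat{H}(\omega)$ and $\mat{A}(\omega_m)$ about $\omega_{m_0}$, and especially the algebraic device $\omega_m-\omega_{m_0}=(\omega-\omega_{m_0})-(\omega-\omega_m)$ are exactly the right ingredients: the latter is what produces the binomial weights $\binom{k'}{r}(-1)^r$ and routes $r$ frequency factors onto the transmit pulse (centred at $\omega_m$) and $k+k'-r$ onto the receive pulse (centred at $\omega_{m_0}$), giving precisely the definition of $\vect{y}_{l_0,m_0}^{p,q^{(k)}}(k')$. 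The conversion of frequency moments into sampled time-derivatives via $\mathbf{(As2)}$, with the $T_s/(2M)$ sampling step generating the $(2M)^{-(k+k')}$ prefactor, is also correct, and the vanishing of $p$ and its derivatives at the support endpoints is indeed what kills the boundary terms.

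Two small cautions. First, $\mathbf{(As1)}$ only asserts that $\mat{H}$ and $\mat{A}$ are twice differentiable, so for general $R$ the statement tacitly assumes more regularity; this is harmless here since the paper only invokes the lemma with $R=2$. Second, your remainder discussion is the right place to be careful: the truncation error is stated as $O(M^{-R})$, not $O(M^{-(R+1)})$, precisely because with only $C^R$ smoothness the Taylor remainder is $o(|\omega-\omega_{m_0}|^{R})$ rather than $O(|\omega-\omega_{m_0}|^{R+1})$, and after integrating against the pulse spectra (whose mass concentrates on an $O(1/M)$ window) one recovers $O(M^{-R})$ but no better. Your last paragraph already flags this as the delicate step.
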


A direct application of the above lemma for $R=2$ allows us to write%
\begin{align}
	\vect{x}_{l_0,m_0}^{p,q}&=\mat{B}\mat{H}\mat{A}\vect{y}_{l_0,m_0} -\frac{j}{2M} \boldsymbol{\epsilon}_1-\frac{1}{2(2M)^2} \boldsymbol{\epsilon}_2+O(M^{-2}) \label{asympto_out}
\end{align}
where we have defined
\begin{align*}
\boldsymbol{\epsilon}_1&=\mat{B}\left(  \mat{HA}\right)  ^{(1)} \vect{y}_{l_0,m_0}^{p,q^{(1)}}-\mat{BHA}^{(1)} \vect{y}_{l_0,m_0}^{p^{(1)},q}\\
\boldsymbol{\epsilon}_2&=\mat{B}\left(  \mat{HA}\right)  ^{(2)} \vect{y}_{l_0,m_0}^{p,q^{(2)}} - 2 \mat{B}\left(  \mat{HA}^{(1)}\right)  ^{(1)} \vect{y}_{l_0,m_0}^{p^{(1)},q^{(1)}} \\
&+\mat{BHA}^{(2)}\vect{y}_{l_0,m_0}^{p^{(2)},q}
\end{align*}
At this point, force $p=q$ so that the same prototype pulse is used at both
transmitter and receiver. We define the distortion error associated with the
complex-valued symbols as\footnote{Note that this definition is in accordance with equation (\ref{eq:asymptotic_MSE}) when noise is absent.}
\begin{align}
	P_{d}(m)=2\mathbb{E} \left( \|\Re(\vect{x}_{l_0,m_0}^{p,q}) -\vect{d}_{l_0,m_0}\|^{2}\right) \label{distortion_power} .
\end{align}

We now need to define some pulse-related quantities. Given two generic pulses, $p,q$ of
length $2M\kappa$ and let $\mat{P}$ and $\mat{Q}$ denote two
$2M\times\kappa$ matrices obtained by arranging the samples of the respective
pulses in columns from left to right. We will define
\begin{align*}
\mathcal{R}(p,q)  &  =\mat{P\circledast J}_{2M}\mat{Q}\\
\mathcal{S}(p,q)  &  =\left(  \mat{J}_{2}\otimes\mat{I}_{M}\right)
\mat{P\circledast J}_{2M}\mat{Q}%
\end{align*}
where $\mathbf{\circledast}$ denotes row-wise convolution, $\otimes$ denotes
Kronecker product, $\mat{I}_{M}$ (resp. $\mat{J}_{M}$) are the identity (resp. exchange) matrices of order $M$. Given four generic pulses $p,q,r,s$, we define
\begin{align*}
&\eta^{\pm}\left(  p,q,r,s\right)  \\
&  =\frac{1}{2M}\text{tr}\left[
\mat{U}^{+}\mathcal{R}\left(  p,q\right)  \mathcal{R}^{T}\left(
r,s\right)  +\mat{U}^{-}\mathcal{S}\left(  p,q\right)  \mathcal{S}%
^{T}\left(  r,s\right)  \right]  \\
&\eta^{\mp}\left(  p,q,r,s\right)   \\
&  =\frac{1}{2M}\text{tr}\left[
\mat{U}^{-}\mathcal{R}\left(  p,q\right)  \mathcal{R}^{T}\left(
r,s\right)  +\mat{U}^{+}\mathcal{S}\left(  p,q\right)  \mathcal{S}%
^{T}\left(  r,s\right)  \right]
\end{align*}
where $\mat{U}^{\mat{\pm}}=\mat{I}_{2}\otimes\left(  \mat{I}%
_{M}\mat{\pm J}_{M}\right)  $. In order to simplify the notations, given four integers $m,n,r,s$, we will define $\eta_{mnrs}%
^{(+,-)}=\eta^{\pm}\left(  p^{(m)},p^{(n)},p^{(r)},p^{(s)}\right)  $.

Assuming that the pulse has PR properties $\mathbf{(As2)}$ so that $\vect{d}_{l_0,m_0}=\Re(\vect{y}_{l_0,m_0}^{p,q})$, we can
obtain an asymptotic expression for this distortion by simply inserting
(\ref{asympto_out}) in (\ref{distortion_power}) for $p=q$ and by using the fact
that \cite[Appendix B]{mestre13tsp}%
\begin{align*}
\mathbb{E}  \left(\Re(\vect{y}_{l_0,m_0}^{p^{(m)},q^{(n)}})\Re^T(\vect{y}_{l_0,m_0}^{p^{(r)},q^{(s)}})\right)  &  =\eta_{mnrs}^{(+,-)}\mat{I}_S\\
\mathbb{E}  \left(\Im(\vect{y}_{l_0,m_0}^{p^{(m)},q^{(n)}})\Im^T(\vect{y}_{l_0,m_0}^{p^{(r)},q^{(s)}})\right)  &  =\eta_{mnrs}^{(-,+)}\mat{I}_S\\
\mathbb{E}  \left(\Re(\vect{y}_{l_0,m_0}^{p^{(m)},q^{(n)}})\Im^T(\vect{y}_{l_0,m_0}^{p^{(r)},q^{(s)}})\right)  &  =\mat{0}.
\end{align*}}
The resulting expression can be more compactly expressed by using the fact
that $\eta_{mnrs}^{(+,-)}=\eta_{rsmn}^{(+,-)}$, $\eta_{mnrs}^{(+,-)}%
=\eta_{nmsr}^{(+,-)}$,\footnote{Obviously, the same identities hold if $(+,-)$ is replaced by
	$(-,+)$.} as established in Lemma \ref{lemma:non_sym_pulses}\ of
Appendix \ref{sec:eta_properties}. Furthermore, if we assume that the
prototype pulse is either symmetric or anti-symmetric, one can establish that
$\eta_{0000}^{(+,-)}=\eta_{0000}^{(-,+)}$, $\eta_{0001}^{(+,-)}=\eta
_{0001}^{(-,+)}$, $\eta_{0101}^{(+,-)}=\eta_{0101}^{(-,+)}$ and $\eta
_{0020}^{(+,-)}=\eta_{0020}^{(-,+)}$, a fact that is proven in Lemma
\ref{lemma:etaplusminus_minusplus} of Appendix \ref{sec:eta_properties}. If,
additionally, the prototype pulse meets the PR conditions, we can guarantee
that $\eta_{0001}^{(+,-)}=0$ as established in Lemma \ref{lemma:exact_PR} of
Appendix \ref{sec:eta_properties}.

Using all this, together with the fact that%
\begin{align*}
\text{tr}\left[  \Re(\mat{X})\Re^{T}%
(\mat{Y})+\Im(\mat{X})\Im^{T}(\mat{Y})\right]
&  =\Re  \text{tr}\left[  \mat{XY}^{H}\right]  \\
\text{tr}\left[  \Im(\mat{X})\Re^{T}%
(\mat{Y})-\Re(\mat{X})\Im^{T}(\mat{Y})\right]
&  =\Im \text{tr}\left[  \mat{XY}^{H}\right] 
\end{align*}
for any complex valued matrices $\mat{X}$, $\mat{Y}$ of appropriate
dimensions, we see that
\begin{equation}
P_{d}(m)=2\xi_{0,m}+\frac{2}{\left(  2M\right)  ^{2}}\xi_{2,m}+O\left(
M^{-2}\right)  \label{eq:pe1_exact}%
\end{equation}
where we have defined
\begin{align*}
\xi_{0,m}  &  =\eta_{0000}^{(+,-)}\text{tr}\left[  \left(  \mat{BHA-I}%
\right)  \left(  \mat{BHA-I}\right)  ^{H}\right] \\
\xi_{2,m}  &  =\eta_{1010}^{(+,-)}\left(  \text{tr}\left[  \mat{BHA}%
^{(1)}\left(  \mat{BHA}^{(1)}\right)  ^{H}\right] +\text{tr}\left[
\mat{B}\left(  \mat{HA}\right)  ^{(1)}\left(  \mat{B}\left(
\mat{HA}\right)  ^{(1)}\right)  ^{H}\right]  \right) \\
&  -\eta_{2000}^{(+,-)}\left(  \Re\text{tr}\left[  \left(
\mat{BHA-I}\right)  \left(  \mat{BHA}^{(2)}\right)  ^{H}\right]
 +\Re\text{tr}\left[  \left(  \mat{BHA-I}\right)  \left(
\mat{B}\left(  \mat{HA}\right)  ^{(2)}\right)  ^{H}\right]  \right) \\
&  +2\eta_{0011}^{(+,-)}\text{tr}\left[  \Re\left[
\mat{BHA-I}\right]  \Re^{T}\left[  \mat{B}\left(
\mat{HA}^{(1)}\right)  ^{(1)}\right]  \right] \\
&+2\eta_{0011}^{(-,+)}%
\text{tr}\left[  \Im\left[  \mat{BHA-I}\right]
\Im^{T}\left[  \mat{B}\left(  \mat{HA}^{(1)}\right)
^{(1)}\right]  \right] \\
&  -2\eta_{1001}^{(+,-)}\text{tr}\left[  \Im\left[
\mat{B}\left(  \mat{HA}\right)  ^{(1)}\right]  \Im%
^{T}\left[  \mat{BHA}^{(1)}\right]  \right] \\
&-2\eta_{1001}^{(-,+)}%
\text{tr}\left[  \Re\left[  \mat{B}\left(  \mat{HA}%
\right)  ^{(1)}\right]  \Re^{T}\left[  \mat{BHA}%
^{(1)}\right]  \right]
\end{align*}
and where all frequency-depending matrices are evaluated at $\omega=\omega
_{m}$.

It is easy to see that, thanks to the PR property of the prototype pulse, we
will have $\eta_{0000}^{(+,-)}=P_{s}/2$, where we recall that $P_{s}$ is the
power of the complex constellation\ symbols. Therefore, the asymptotic
distortion power depends on six different pulse-related quantities, which
present some non-trivial interrelationships. To establish these
interrelationships, we invoke again Lemma \ref{lemma:etaplusminus_minusplus}
of Appendix \ref{sec:eta_properties}. In particular, the relationships in
(\ref{eq:eta_change_signs_1})-(\ref{eq:eta_change_signs_2}) allow us to
establish
\begin{gather}
\left[  \eta_{1001}^{(+,-)}-\eta_{1001}^{(-,+)}\right]  -\left[  \eta
_{0011}^{(+,-)}-\eta_{0011}^{(-,+)}\right]  =0\label{eq:rel_eta_1}\\
\eta_{0011}^{(+,-)}+\eta_{1010}^{(+,-)}=0. \label{eq:rel_eta_2}%
\end{gather}
We can find further equivalences between the different pulse quantities by
considering here the asymptotic domain as $M\rightarrow\infty$ together with
the fact that the prototype pulse is, by assumption, a sampled version of a
smooth analog waveform.

Indeed, applying the result in Proposition \ref{proposition:etas_asym_deriv},
Corollary \ref{corollary:etas_asym_deriv_PR} and Lemma
\ref{lemma:etas_asym_deriv_PR} in Appendix \ref{sec:eta_properties} together
with the fact that the prototype pulse is symmetric and PR compliant, we
obtain%
\begin{align}
\eta_{2000}^{(+,-)}-\eta_{0011}^{(+,-)}+\eta_{1010}^{(+,-)}-\eta
_{1001}^{(-,+)}  &  =O(M^{-1})\label{eq:rel_eta_3}\\
\eta_{0011}^{(+,-)}-\eta_{2000}^{(+,-)}  &  =O(M^{-2}). \label{eq:rel_eta_4}%
\end{align}
We may consider the system of equations formed by (\ref{eq:rel_eta_1}%
),(\ref{eq:rel_eta_2}),(\ref{eq:rel_eta_3}) and (\ref{eq:rel_eta_4}). Denoting
$\alpha=\eta_{1010}^{(+,-)}$ and $\beta=\eta_{0011}^{(-,+)}$, we can express
the system of equations as%
\[
\left[
\begin{array}
[c]{cccc}%
1 & -1 & -1 & 0\\
0 & 0 & 1 & 0\\
0 & -1 & -1 & 1\\
0 & 0 & 1 & -1
\end{array}
\right]  \left[
\begin{array}
[c]{c}%
\eta_{1001}^{(+,-)}\\
\eta_{1001}^{(-,+)}\\
\eta_{0011}^{(+,-)}\\
\eta_{2000}^{(+,-)}%
\end{array}
\right]  =\left[
\begin{array}
[c]{c}%
-\beta\\
-\alpha\\
-\alpha\\
0
\end{array}
\right]  +O(M^{-1})
\]
so that we can conclude that
\[
\left[
\begin{array}
[c]{c}%
\eta_{1001}^{(+,-)}\\
\eta_{1001}^{(-,+)}\\
\eta_{0011}^{(+,-)}\\
\eta_{2000}^{(+,-)}%
\end{array}
\right]  =\left[
\begin{array}
[c]{c}%
-\beta\\
\alpha\\
-\alpha\\
-\alpha
\end{array}
\right]  +O(M^{-1})
\]
Using this in (\ref{eq:pe1_exact}) we obtain (\ref{eq:asymptotic_distortion}).

\section{\label{sec:eta_properties}Some properties of the quantities
	$\eta_{mnrs}^{(+,-)}$}

In this appendix, we provide some\ identities on the quantities $\eta
_{mnrs}^{(+,-)}$, $\eta_{mnrs}^{(-,+)}$ that will clearly simplify the
expression for the asymptotic distortion derived above. We will begin by presenting some
properties that hold exactly for all values of $M$.

\subsection{Non-asymptotic properties}

Let us begin with general properties that hold regardless of whether the pulses
are symmetric or not.

\begin{lemma}
	\label{lemma:non_sym_pulses}By the definition of the $\eta^{\pm}\left(
	p,q,r,s\right)  $, and regardless of the pulse symmetries, we have
	\[
	\eta^{\pm}\left(  p,q,r,s\right)  =\eta^{\pm}\left(  r,s,p,q\right)
	\]
	and
	\[
	\eta^{\pm}\left(  p,q,r,s\right)  =\eta^{\pm}\left(  q,p,s,r\right)  .
	\]
	The same identities hold if $\pm$ is replaced by $\mp$ everywhere.
\end{lemma}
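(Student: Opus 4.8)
The plan is to prove Lemma~\ref{lemma:non_sym_pulses} by direct manipulation of the definitions of $\mathcal{R}(p,q)$, $\mathcal{S}(p,q)$ and $\eta^{\pm}$, exploiting only elementary linear-algebraic identities (transpose of a product, symmetry of $\mat{U}^{\pm}$, and the behaviour of the exchange matrix $\mat{J}$ under row-wise convolution). First I would establish the first identity. Writing
\[
\eta^{\pm}\left(  p,q,r,s\right)  =\frac{1}{2M}\text{tr}\left[
\mat{U}^{+}\mathcal{R}\left(  p,q\right)  \mathcal{R}^{T}\left(
r,s\right)  +\mat{U}^{-}\mathcal{S}\left(  p,q\right)  \mathcal{S}^{T}\left(  r,s\right)  \right],
\]
I would use cyclicity and transpose-invariance of the trace: $\text{tr}[\mat{U}^{+}\mathcal{R}(p,q)\mathcal{R}^{T}(r,s)]=\text{tr}[(\mat{U}^{+}\mathcal{R}(p,q)\mathcal{R}^{T}(r,s))^{T}]=\text{tr}[\mathcal{R}(r,s)\mathcal{R}^{T}(p,q)(\mat{U}^{+})^{T}]=\text{tr}[\mat{U}^{+}\mathcal{R}(r,s)\mathcal{R}^{T}(p,q)]$, where the last step uses that $\mat{U}^{+}=\mat{I}_2\otimes(\mat{I}_M+\mat{J}_M)$ is symmetric and that the trace is cyclic. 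The identical argument applies to the $\mathcal{S}$-term using symmetry of $\mat{U}^{-}$ and of $(\mat{J}_2\otimes\mat{I}_M)$, so that swapping $(p,q)\leftrightarrow(r,s)$ leaves $\eta^{\pm}$ unchanged. Since this argument never touched the $+/-$ choice inside $\mat{U}^{\pm}$, the same holds for $\eta^{\mp}$.

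For the second identity I would exploit the structure of row-wise convolution against $\mat{J}_{2M}$. The key observation is that reversing the order of the two pulses in a row-wise convolution reverses each resulting row: $\mat{P}\circledast\mat{Q}$ and $\mat{Q}\circledast\mat{P}$ are related by a row-reversal, and the extra factor $\mat{J}_{2M}$ applied on the right of $\mat{P}$ already performs such a reversal on the $\mat{P}$-block. Concretely, I expect to show $\mathcal{R}(q,p)=\mat{J}_{\kappa}\,\mathcal{R}(p,q)^{\,\ast\text{-free transpose-like relation}}$; more precisely the relevant statement is that $\mathcal{R}(q,p)$ is obtained from $\mathcal{R}(p,q)$ by simultaneously exchanging the roles that make $\mathcal{R}(q,p)\mathcal{R}^T(s,r)$ a rearrangement of $\mathcal{R}(p,q)\mathcal{R}^T(r,s)$ under conjugation by a fixed permutation matrix that commutes with $\mat{U}^{+}$. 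Because $\mat{U}^{+}$ and $\mat{U}^{-}$ are block matrices built from $\mat{I}_M$ and $\mat{J}_M$, they commute with $\mat{J}_{2M}=\mat{J}_2\otimes\mat{J}_M$ (note $\mat{J}_2\otimes\mat{J}_M$ commutes with $\mat{I}_2\otimes\mat{J}_M$ and with $\mat{J}_2\otimes\mat{I}_M$), and conjugating inside a trace by such a matrix changes nothing. Tracking the same reversal through $\mathcal{S}(p,q)=(\mat{J}_2\otimes\mat{I}_M)\mathcal{R}(p,q)$ only introduces the additional factor $\mat{J}_2\otimes\mat{I}_M$, which again commutes with the relevant $\mat{U}^{\pm}$, so the $\mathcal{S}$-term is handled identically.

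The main obstacle I anticipate is making the ``row-reversal'' bookkeeping for $\mathcal{R}(q,p)$ versus $\mathcal{R}(p,q)$ fully rigorous: one must be careful about how $\mat{J}_{2M}$ multiplying $\mat{Q}$ on the right interacts with the convolution when the pulse order is swapped, and about the fact that convolution of length-$2M\kappa$ sequences arranged into $2M\times\kappa$ matrices is being truncated/wrapped in a specific way. I would settle this by writing the $(i,c)$ entry of $\mathcal{R}(p,q)$ explicitly as a sum $\sum_{c'} P_{i',c'} Q_{\ldots}$ over the convolution index, verifying that swapping $p\leftrightarrow q$ together with the fixed index reversal $i\mapsto 2M-i$, $c\mapsto \kappa+1-c$ (or the appropriate shift) reproduces the entries, and then checking that this index map is exactly conjugation by $\mat{J}_{2M}$ (respectively by $\mat{J}_\kappa$ acting on columns, which drops out since it appears symmetrically on both $\mathcal{R}(p,q)$ and $\mathcal{R}^T(r,s)$ inside the trace). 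Once the entry-level identity $\mathcal{R}(q,p)=\mat{J}_{2M}\,\mathcal{R}(p,q)\,\mat{J}_{\kappa}$ (and the analogous one for $\mathcal{S}$) is in hand, the proof collapses to the commutation facts above and cyclicity of the trace; the $+\leftrightarrow-$ invariance is automatic because the argument is uniform in the sign.
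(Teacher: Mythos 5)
Your proposal is correct and follows essentially the same route as the paper: the first identity via transpose-invariance and cyclicity of the trace together with the symmetry of $\mat{U}^{\pm}$, and the second via the conjugation relations $\mathcal{R}(p,q)\mathcal{R}^{T}(r,s)=\mat{J}_{2M}\mathcal{R}(q,p)\mathcal{R}^{T}(s,r)\mat{J}_{2M}$ and its $\mathcal{S}$-analogue, combined with the commutation of the relevant exchange matrices with $\mat{U}^{\pm}$. The only slip is cosmetic: the column-reversal factor in your entry-level identity should have size $2\kappa-1$ (the width of the row-wise convolution output) rather than $\kappa$, but as you note it cancels inside the product anyway.
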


\begin{proof}
	Indeed, the first result is a consequence of the fact that
	\begin{align*}
	\left(  \mat{U}^{\mat{\pm}}\mathcal{R}\left(  p,q\right)  \mathcal{R}%
	^{T}\left(  r,s\right)  \right)  ^{T}  &  =\mathcal{R}\left(  r,s\right)
	\mathcal{R}^{T}\left(  p,q\right)  \mat{U}^{\mat{\pm}}\\
	\left(  \mat{U}^{\mat{\pm}}\mathcal{S}\left(  p,q\right)  \mathcal{S}%
	^{T}\left(  r,s\right)  \right)  ^{T}  &  =\mathcal{S}\left(  r,s\right)
	\mathcal{S}^{T}\left(  p,q\right)  \mat{U}^{\mathbf{\pm}}%
	\end{align*}
	whereas the second one follows from the identities
	\begin{align*}
	\mathcal{R}\left(  p,q\right)  \mathcal{R}^{T}\left(  r,s\right)   &
	=\mat{J}_{2M}\mathcal{R}\left(  q,p\right)  \mathcal{R}^{T}\left(
	s,r\right)  \mat{J}_{2M},\\
	\mathcal{S}\left(  p,q\right)  \mathcal{S}^{T}\left(  r,s\right)   &  =\left(
	\mat{I}_{2}\otimes\mat{J}_{M}\right)  \mathcal{S}\left(  q,p\right)
	\mathcal{S}^{T}\left(  s,r\right)  \left(  \mat{I}_{2}\otimes\mat{J}%
	_{M}\right)
	\end{align*}
	and the definition of $\mat{U}^{\mat{\pm}}$.
\end{proof}

Sometimes, it is useful to consider a relationship between quantities of the
type $\eta^{\pm}$ and $\eta^{\mp}$. In order to obtain such relationships, we
impose that the pulses are either symmetric or anti-symmetric in the time domain.

\begin{lemma}
	\label{lemma:etaplusminus_minusplus}Assume that all the pulses $p,q,r,s$ are
	either symmetric or anti-symmetric in the time domain. Let $s(p)$ be defined
	so that $s(p)=0$ if the pulse $p$ has even symmetry and $s(p)=1$ if the pulse
	is anti-symmetric. Then, we can write%
	\begin{align}
	&\left(  -1\right)  ^{s(p)}\left[  \eta^{\pm}(p,q,r,s)-\eta^{\mp}%
	(p,q,r,s)\right] \nonumber \\
	& +\left(  -1\right)  ^{s(r)}\left[  \eta^{\pm}(r,q,p,s)-\eta
	^{\mp}(r,q,p,s)\right]  =0 \label{eq:eta_change_signs_1}%
	\end{align}
	and also
	\begin{align}
	&\left(  -1\right)  ^{s(p)}\left[  \eta^{\pm}(p,q,r,s)+\eta^{\mp}%
	(p,q,r,s)\right] \nonumber \\
	& \pm\left(  -1\right)  ^{s(p)}\left[  \eta^{\pm
	}(p,q,s,r)-\eta^{\mp}(p,q,s,r)\right]  \nonumber\\
	&=\left(  -1\right)  ^{s(s)}\left[  \eta^{\pm}(s,q,r,p)+\eta^{\mp
	}(s,q,r,p)\right]  \label{eq:eta_change_signs_2}%
	\end{align}
	In particular, for the specific case where $p=r$ we have%
	\[
	\eta^{\pm}(p,q,p,s)=\eta^{\mp}(p,q,p,s).
	\]
	Finally, if $q=s$ and the pulses $p$ and $r$ have the same type of symmetry,
	we have
	\[
	\eta^{\pm}(p,q,r,q)=\eta^{\mp}(p,q,r,q).
	\]
	
\end{lemma}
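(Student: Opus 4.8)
The plan is to work straight from the definitions of $\eta^{\pm}$ and $\eta^{\mp}$ as traces of products of the Kronecker‑structured matrices $\mat{U}^{\pm}=\mat{I}_2\otimes(\mat{I}_M\pm\mat{J}_M)$ with $\mathcal{R}(\cdot,\cdot)$ and $\mathcal{S}(\cdot,\cdot)$, and to translate the (anti‑)symmetry hypothesis into an algebraic identity that can be pushed through those traces. First I would encode the symmetry at the matrix level: if $\mat{P}\in\mathbb{R}^{2M\times\kappa}$ holds the samples of $p$ column by column, then "$p$ symmetric/anti‑symmetric" is exactly $\mat{P}\mat{J}_\kappa=(-1)^{s(p)}\mat{J}_{2M}\mat{P}$, equivalently $\mat{J}_{2M}\mat{P}\mat{J}_\kappa=(-1)^{s(p)}\mat{P}$, since reversing the flattened length‑$2M\kappa$ sequence amounts to reversing both the rows and the columns of $\mat{P}$ up to the sign. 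Row by row this reads $\mat{J}_\kappa\mathbf{p}_k=(-1)^{s(p)}\mathbf{p}_{2M+1-k}$ for the $k$‑th row $\mathbf{p}_k$ of $\mat{P}$.

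Next I would propagate this identity through the row‑wise convolution defining $\mathcal{R}$ and $\mathcal{S}$, using the two elementary facts (i) $\mat{J}_{2\kappa-1}(\mathbf{a}\circledast\mathbf{b})=(\mat{J}_\kappa\mathbf{a})\circledast(\mat{J}_\kappa\mathbf{b})$ and (ii) $(\mat{J}_\kappa\mathbf{a})\circledast\mathbf{b}=\mat{J}_{2\kappa-1}\big(\mathbf{a}\circledast(\mat{J}_\kappa\mathbf{b})\big)$, together with the row reversal carried by the factor $\mat{J}_{2M}$ inside $\mat{P}\circledast\mat{J}_{2M}\mat{Q}$. This yields a reversal law of the type $\mathcal{R}(p,q)=(-1)^{s(p)+s(q)}\,\mathcal{R}(q,p)\,\mat{J}_{2\kappa-1}$ and an analogous law for $\mathcal{S}$, with the crucial caveat that in $\mathcal{S}$ the block exchange $\mat{J}_2\otimes\mat{I}_M$ is applied to $\mat{P}$ \emph{before} the convolution, so its row index is permuted differently from that of $\mat{J}_{2M}\mat{Q}$, and the reindexing mixes the two $M$‑blocks of the row set. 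Substituting these laws and the already‑established identities of Lemma~\ref{lemma:non_sym_pulses} into the traces and invoking cyclicity of the trace produces the versions of $\eta^{\pm},\eta^{\mp}$ with the first and third pulse arguments interchanged, decorated precisely by the scalars $(-1)^{s(p)}$ and $(-1)^{s(r)}$; collecting terms gives \eqref{eq:eta_change_signs_1}. For \eqref{eq:eta_change_signs_2} I would use the same machinery but also exploit that the sum and the difference collapse: since $\mat{U}^++\mat{U}^-=2\mat{I}_{2M}$ and $\mat{U}^+-\mat{U}^-=2(\mat{I}_2\otimes\mat{J}_M)$, the sum $\eta^{\pm}+\eta^{\mp}$ is a trace of $\mathcal{R}\mathcal{R}^T+\mathcal{S}\mathcal{S}^T$ with no folding matrix, while the difference $\eta^{\pm}-\eta^{\mp}$ is a trace of $(\mat{I}_2\otimes\mat{J}_M)(\mathcal{R}\mathcal{R}^T-\mathcal{S}\mathcal{S}^T)$. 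Moving an extra within‑block reflection past $\mathcal{S}$'s block structure then converts an $\mat{I}_M+\mat{J}_M$ fold into an $\mat{I}_M-\mat{J}_M$ fold, i.e. it is exactly what interchanges the roles of $\eta^{\pm}$ and $\eta^{\mp}$, and this produces the $(p\leftrightarrow s)$ swapped sum terms and the $(r\leftrightarrow s)$ swapped difference term in \eqref{eq:eta_change_signs_2}.

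The two displayed special cases then fall out by specialization: setting $p=r$ in \eqref{eq:eta_change_signs_1} makes its two bracketed terms coincide up to the common sign, which forces $\eta^{\pm}(p,q,p,s)=\eta^{\mp}(p,q,p,s)$; and setting $q=s$ with $p$ and $r$ of equal symmetry type collapses \eqref{eq:eta_change_signs_2} to $\eta^{\pm}(p,q,r,q)=\eta^{\mp}(p,q,r,q)$.

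I expect the main obstacle to be the index bookkeeping in the second step. Because $\mathcal{S}$ carries the block swap \emph{inside} the convolution, $\mathcal{R}$ and $\mathcal{S}$ do not transform identically under the relevant reflections, so one must track three distinct permutations of $\{1,\dots,2M\}$ simultaneously — the full reversal $k\mapsto 2M+1-k$, the block exchange, and the within‑block mirror $k\mapsto\bar k$ (noting $\mat{J}_{2M}=\mat{J}_2\otimes\mat{J}_M$ factors them) — and verify that, after all cancellations, the $\mat{U}^+$ and $\mat{U}^-$ pieces land in the claimed positions with the claimed signs. The algebra itself is routine; getting these reindexings exactly right (and sanity‑checking a small case such as $\kappa=2$ to catch off‑by‑one slips) is where the care is required.
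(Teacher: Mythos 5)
Your overall strategy---encoding the pulse symmetry as the matrix identity $\mat{P}\mat{J}_{\kappa}=(-1)^{s(p)}\mat{J}_{2M}\mat{P}$ and pushing it through the definitions of $\mathcal{R}$, $\mathcal{S}$ and the traces---is the same as the paper's, and your treatment of the two final special cases by specialization is essentially right (though the second one follows cleanly from \eref{eq:eta_change_signs_1} together with Lemma~\ref{lemma:non_sym_pulses}, which is how the paper obtains it; deriving it from \eref{eq:eta_change_signs_2} as you propose is not obviously a ``collapse''). However, there is a genuine gap at the central step. The identities \eref{eq:eta_change_signs_1}--\eref{eq:eta_change_signs_2} equate $\eta^{\pm}$-type quantities with $\eta^{\mp}$-type quantities, i.e., they exchange which of $\mathcal{R}\mathcal{R}^{T}$ and $\mathcal{S}\mathcal{S}^{T}$ is folded with $\mat{U}^{+}$ and which with $\mat{U}^{-}$, while simultaneously exchanging the \emph{first and third} pulse arguments with a sign depending only on $s(p)$ and $s(r)$. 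The tools you list cannot produce this: facts (i)--(ii) and the resulting reversal laws $\mathcal{R}(p,q)=(-1)^{s(p)+s(q)}\mathcal{R}(q,p)\mat{J}_{2\kappa-1}$ (and its $\mathcal{S}$ analogue) act \emph{within} $\mathcal{R}$ or \emph{within} $\mathcal{S}$, permute the first and \emph{second} arguments, and carry the sign $(-1)^{s(p)+s(q)}$ rather than $(-1)^{s(p)+s(r)}$. Since these operations, as well as the permutations of Lemma~\ref{lemma:non_sym_pulses}, all preserve the $\pm/\mp$ label of $\eta$, no composition of them can ever equate a $\pm$ quantity with a $\mp$ quantity; your appeal to ``cyclicity of the trace'' does not bridge this.

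The missing ingredient is a re-pairing identity for inner products of two \emph{different} convolutions, namely $\langle \mathbf{a}\circledast\mathbf{b},\,\mathbf{c}\circledast\mathbf{d}\rangle=\langle \mathbf{a}\circledast\mat{J}\mathbf{d},\,\mathbf{c}\circledast\mat{J}\mathbf{b}\rangle$ (a consequence of commutativity and associativity of convolution, not of the single-sequence reversal rules (i)--(ii)). The paper states this in matrix form: the diagonal entries of $\left(\mat{A}\circledast\mat{B}\mat{J}_{\kappa}\right)\left(\mat{C}\circledast\mat{D}\mat{J}_{\kappa}\right)^{T}\mat{J}_{M}$ coincide with those of $\mat{J}_{M}\left(\mat{C}\circledast\mat{J}_{M}\mat{B}\right)\left(\mat{A}\circledast\mat{J}_{M}\mat{D}\right)^{T}$. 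Applied to the anti-diagonal entries picked out by $\mat{U}^{+}-\mat{U}^{-}=2\left(\mat{I}_{2}\otimes\mat{J}_{M}\right)$, and combined with the row symmetry of $\mat{P}$ and $\mat{R}$, it converts $\mathrm{tr}\left[\mathcal{R}_{i}(p,q)\mathcal{R}_{i}^{T}(r,s)\mat{J}_{M}\right]$ into $(-1)^{s(p)+s(r)}\,\mathrm{tr}\left[\mat{J}_{M}\mathcal{S}_{i}(r,q)\mathcal{S}_{i}^{T}(p,s)\right]$; it is exactly this $\mathcal{R}\!\leftrightarrow\!\mathcal{S}$ conversion with the $(p,r)$ swap and the $(-1)^{s(p)+s(r)}$ sign that makes the $\mathcal{R}$-part of $\eta^{\pm}(p,q,r,s)-\eta^{\mp}(p,q,r,s)$ cancel against the $\mathcal{S}$-part of $\eta^{\pm}(r,q,p,s)-\eta^{\mp}(r,q,p,s)$ and yields \eref{eq:eta_change_signs_1} (and, with the auxiliary relations $\mat{J}_{M}\mathcal{R}_{i}(p,q)=\mathcal{R}_{3-i}(q,p)$ and $\mat{J}_{M}\mathcal{S}_{i}(p,q)=\mathcal{S}_{i}(q,p)$, also \eref{eq:eta_change_signs_2}). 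You need to identify and prove this cross identity; without it the proof does not close.
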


\begin{proof}
	Let us denote by $\mathcal{R}_{1}(p,q)$ and $\mathcal{R}_{2}(p,q)$ the upper
	and lower matrices of $\mathcal{R}(p,q)$, and equivalently for $\mathcal{S}%
	_{1}(p,q)$ and $\mathcal{S}_{2}(p,q)$. Let $\mat{P}$ denote a
	$2M\times\kappa$ matrix obtained by arranging the pulse $p[n]$ in columns, and
	let $\mat{P}_{1}$ and $\mat{P}_{2}$ respectively denote the matrices
	obtained by selecting the $M$ upper and lower rows of $\mat{P}$
	respectively. By the symmetry of $p[n]$, we know that
	\[
	\mat{P}_{1}=(-1)^{s(p)}\mat{J}_{M}\mat{P}_{2}\mat{J}_{2\kappa-1}.
	\]
	On the other hand, we can prove that, for any four matrices $\mat{A,B,C}$
	and $\mat{D}$ of dimensions $M\times\kappa$, the diagonal entries of
	$\left(  \mat{A}\circledast\mat{BJ}_{\kappa}\right)  \left(
	\mat{C}\circledast\mat{DJ}_{\kappa}\right)  ^{T}\mat{J}_{M}$ are
	equal to the diagonal entries of $\mat{J}_{M}\left(  \mat{C}%
	\circledast\mat{J}_{M}\mat{B}\right)  \left(  \mat{A}\circledast
	\mat{J}_{M}\mat{D}\right)  ^{T}$. This shows that, using pulse symmetry,%
	\begin{align}
	\text{tr}\left[  \mathcal{R}_{1}\left(  p,q\right)  \mathcal{R}_{1}^{T}\left(
	r,s\right)  \mat{J}_{M}\right] 	 &  =\text{tr}\left[  \mat{J}_{M}\left(
	\mat{J}_{M}\mat{P}_{1}\circledast\mat{Q}_{2}\right)  \left(
	\mat{J}_{M}\mat{R}_{1}\circledast\mat{S}_{2}\right)  ^{T}\right]
	\nonumber\\
	&  =\left(  -1\right)  ^{s(p)+s(r)}\text{tr}\left[  \left(  \mat{P}%
	_{2}\mat{J}_{\kappa}\circledast\mat{Q}_{2}\right)  \left(
	\mat{R}_{2}\mat{J}_{\kappa}\circledast\mat{S}_{2}\right)
	^{T}\mat{J}_{M}\right]  \nonumber\\
	&  \overset{(\ast)}{=}\left(  -1\right)  ^{s(p)+s(r)}\text{tr}\left[
	\mat{J}_{M}\left(  \mat{S}_{2}\circledast\mat{J}_{M}\mat{P}%
	_{2}\right)  \left(  \mat{Q}_{2}\circledast\mat{J}_{M}\mat{R}%
	_{2}\right)  ^{T}\right]  \nonumber\\
	&  =\left(  -1\right)  ^{s(p)+s(r)}\text{tr}\left[  \mat{J}_{M}\left(
	\mat{P}_{2}\circledast\mat{J}_{M}\mat{S}_{2}\right)  \left(
	\mat{R}_{2}\circledast\mat{J}_{M}\mat{Q}_{2}\right)  ^{T}\right]
	\nonumber\\
	&  =\left(  -1\right)  ^{s(p)+s(r)}\text{tr}\left[  \mat{J}_{M}%
	\mathcal{S}_{1}\left(  p,s\right)  \mathcal{S}_{1}^{T}\left(  r,q\right)
	\right]  \nonumber\\
	&  =\left(  -1\right)  ^{s(p)+s(r)}\text{tr}\left[  \mat{J}_{M}%
	\mathcal{S}_{1}\left(  r,q\right)  \mathcal{S}_{1}^{T}\left(  p,s\right)
	\right]  \label{eq:traceR1R1J}
	\end{align}
	where the identity in ($\ast$) follows from the above convolution result.
	Equivalently, we will obviously have%
	\begin{align*}
	\text{tr}\left[  \mathcal{R}_{2}\left(  p,q\right)  \mathcal{R}_{2}^{T}\left(
	r,s\right)  \mat{J}_{M}\right]=\left(  -1\right)  ^{s(p)+s(r)}%
	\text{tr}\left[  \mat{J}_{M}\mathcal{S}_{2}\left(  r,q\right)
	\mathcal{S}_{2}^{T}\left(  p,s\right)  \right]
		\end{align*}
	and the two above identities directly prove (\ref{eq:eta_change_signs_1}).
	Regarding the identity in (\ref{eq:eta_change_signs_2}), it follows easily
	from the above identities together with the fact that $\mat{J}%
	_{M}\mathcal{R}_{i}\left(  p,q\right)  =\mathcal{R}_{3-i}\left(  q,p\right)
	$, $\mat{J}_{M}\mathcal{S}_{i}\left(  p,q\right)  =\mathcal{S}_{i}\left(
	q,p\right)  $, $i=1,2$, and the fact that
	\begin{align*}
	\text{tr}\left[  \mathcal{R}_{1}\left(  p,q\right)  \mathcal{R}_{2}^{T}\left(
	r,s\right)  \mat{J}_{M}\right]   & =\left(  -1\right)  ^{s(p)+s(r)}\text{tr}\left[  \mathcal{R}_{1}\left(  r,q\right)  \mathcal{R}_{2}^{T}\left(
	p,s\right)  \mat{J}_{M}\right]  \\
	\text{tr}\left[  \mathcal{S}_{1}\left(  p,q\right)  \mathcal{S}_{2}^{T}\left(
	r,s\right)  \mat{J}_{M}\right]  &=\left(  -1\right)  ^{s(p)+s(r)}%
	\text{tr}\left[  \mathcal{S}_{1}\left(  r,q\right)  \mathcal{S}_{2}^{T}\left(
	p,s\right)  \mat{J}_{M}\right]
	\end{align*}
	which can be established following the same approach as in
	(\ref{eq:traceR1R1J}). The last two identities in the statement of the lemma
	are obtained as special cases of (\ref{eq:eta_change_signs_1}).
\end{proof}

We finalize the description of the non-asymptotic properties of the $\eta
^{\pm}(p,q,r,s)$ with a result that will be useful whenever two of the pulses
meet the perfect reconstruction conditions.

\begin{lemma}
	\label{lemma:exact_PR}Assume that the two pulses $p,q$ meet the perfect
	reconstruction conditions, and that $r$ and $s$ are either symmetric or
	anti-symmetric in the time domain and have the opposite symmetry. Then
	$\eta^{\pm}(p,q,r,s)=0$.
\end{lemma}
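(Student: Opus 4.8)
The plan is to pass from the definition of $\eta^{\pm}$ to its statistical meaning and then let the perfect-reconstruction property of $(p,q)$ collapse the expression. By the computation of \cite[Appendix~B]{mestre13tsp} (which applies verbatim to any four real pulses of common length $2M\kappa$, not only to derivatives of a single prototype), $\eta^{\pm}(p,q,r,s)\,\mat{I}_S=\mathbb{E}\left(\Re(\vect{y}_{l_0,m_0}^{p,q})\,\Re^{T}(\vect{y}_{l_0,m_0}^{r,s})\right)$, where $\vect{y}_{l_0,m_0}^{a,b}$ is the output at index $(l_0,m_0)$ of the transmultiplexer that FBMC/OQAM-modulates the real i.i.d. streams $\vect{d}_{l,m}$ with prototype $a$ and demodulates with prototype $b$ over an ideal channel. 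Writing $\vect{y}_{l_0,m_0}^{a,b}=\sum_{l,m}\vect{d}_{l,m}\,\alpha_{l,m}^{a,b}$ with $\alpha_{l,m}^{a,b}=\sum_{n}a_{l,m}[n]\,b_{l_0,m_0}^{*}[n]$, and using that the symbols are real, zero-mean and white with variance $P_{s}/2$ ($\mathbf{(As3)}$), this expectation equals $\frac{P_{s}}{2}\left(\sum_{l,m}\Re(\alpha_{l,m}^{p,q})\,\Re(\alpha_{l,m}^{r,s})\right)\mat{I}_S$.

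The key step is to invoke the PR hypothesis on $(p,q)$, which is exactly the statement $\Re(\vect{y}_{l_0,m_0}^{p,q})=\vect{d}_{l_0,m_0}$ for every input block, i.e. $\Re(\alpha_{l,m}^{p,q})=\delta_{(l,m),(l_0,m_0)}$. The double sum then collapses to one term, so that $\eta^{\pm}(p,q,r,s)=\frac{P_{s}}{2}\,\Re(\alpha_{l_0,m_0}^{r,s})$; note that only the \emph{real} part of the self-coefficient of the $(r,s)$ chain survives, the imaginary part (the FBMC intrinsic interference) being filtered out precisely because the data symbols are real. It then remains to evaluate $\alpha_{l_0,m_0}^{r,s}$, which is a short computation: writing out $r_{l_0,m_0}[n]$ and $s_{l_0,m_0}^{*}[n]$ from the data model of \sref{subsect:data_model}, the unit-modulus phase $j^{l_0+m_0}$ cancels against its conjugate and the frequency-shift exponentials cancel (the two pulses share the length $2M\kappa$, hence the same phase reference), which leaves $\alpha_{l_0,m_0}^{r,s}=M^{-2}\sum_{n}r[n]\,\tilde{s}[n]$ with $\tilde{s}[n]=s[2M\kappa-1-n]$, in particular a real number.

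Finally I would close by a parity argument: since $r$ and $s$ are each symmetric or anti-symmetric about the centre $(2M\kappa-1)/2$ of their common support and have \emph{opposite} symmetry, one has $\tilde{s}=\pm s$ while the product sequence $r[n]\,s[n]$ is odd about that centre, so $\sum_{n}r[n]\,\tilde{s}[n]=\pm\sum_{n}r[n]\,s[n]=0$ and $\eta^{\pm}(p,q,r,s)=0$, as claimed. The only point I would write out carefully is the reduction in the second paragraph — that all cross terms really drop (whiteness of the symbols) and that it is the real part of $\alpha_{l_0,m_0}^{r,s}$ that enters; everything else is bookkeeping, and in particular the lemma makes no claim about $\eta^{\mp}$, so no separate treatment of the imaginary-part quantity is needed. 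A purely algebraic alternative would be to translate the PR conditions into a sparsity property of the polyphase matrices $\mathcal{R}(p,q)$ and $\mathcal{S}(p,q)$ and expand $\text{tr}\left[\mat{U}^{+}\mathcal{R}(p,q)\mathcal{R}^{T}(r,s)+\mat{U}^{-}\mathcal{S}(p,q)\mathcal{S}^{T}(r,s)\right]$ directly, the surviving entries of $\mathcal{R}(r,s)$ and $\mathcal{S}(r,s)$ vanishing by the opposite symmetry of $r$ and $s$; that route is more computational, so I would keep the statistical one as the main line.
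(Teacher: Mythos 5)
Your argument is correct, but it takes a genuinely different route from the paper's. The paper works directly from the algebraic definition of $\eta^{\pm}(p,q,r,s)$: the PR conditions on $(p,q)$ force $\mat{U}^{-}\mathcal{S}(p,q)=\mat{0}$ and make $\mat{U}^{+}\mathcal{R}(p,q)$ vanish except for its central column of ones, so the trace collapses in one line to $\frac{1}{2M}\sum_{n}r[n]s[2M\kappa-n+1]$, which vanishes by the opposite symmetry of $r$ and $s$ --- this is precisely the ``purely algebraic alternative'' you relegate to your last sentence. Your main line instead goes through the statistical interpretation $\eta^{\pm}(p,q,r,s)\,\mat{I}_S=\mathbb{E}\left(\Re(\vect{y}^{p,q}_{l_0,m_0})\Re^{T}(\vect{y}^{r,s}_{l_0,m_0})\right)$ and uses PR in its bi-orthogonality form $\Re(\alpha^{p,q}_{l,m})=\delta_{(l,m),(l_0,m_0)}$ to collapse the double sum to the single self-coefficient of the $(r,s)$ chain; the two readings of PR are equivalent (the sparsity of $\mat{U}^{+}\mathcal{R}(p,q)$ and $\mat{U}^{-}\mathcal{S}(p,q)$ is exactly the matrix form of that Kronecker delta), and both proofs terminate in the same zero-lag correlation of $r$ with the time-reversed $s$, killed by the same parity argument. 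What your route buys is interpretability --- it makes transparent why only the real part of $\alpha^{r,s}_{l_0,m_0}$ survives and why the lemma says nothing about $\eta^{\mp}$ --- at the cost of importing the exact correlation identity of \cite{mestre13tsp} for \emph{arbitrary} pulse quadruples rather than prototype derivatives, which you correctly flag but should state explicitly as the load-bearing hypothesis; the paper's route is shorter and needs nothing beyond the definition. One cosmetic caveat: your final expression $\frac{P_s}{2}M^{-2}\sum_n r[n]\tilde s[n]$ and the paper's $\frac{1}{2M}\sum_n r[n]s[2M\kappa-n+1]$ differ by a normalization constant (the paper is itself loose about the $P_s$ factor relating the trace definition of $\eta^{\pm}$ to the correlation), but since the conclusion is that the sum vanishes, this is immaterial.
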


\begin{proof}
	Since $p,q$ have PR conditions, we know that $\mat{U}^{-}\mathcal{S}\left(
	p,q\right)  $ is an all-zero matrix whereas $\mat{U}^{+}\mathcal{R}\left(
	p,q\right)  $ has zeros everywhere except for the central column, which is
	filled with $1$s. Therefore, we are able to write
	\[
	\eta^{\pm}(p,q,r,s)=\frac{1}{2M}\sum_{n=1}^{2M\kappa}r[n]s[2M\kappa-n+1]=0
	\]
	where the last equality follows from the fact that $r$ and $s$ have the
	opposite symmetry.
\end{proof}

\subsection{Asymptotic properties}

Let us now consider some properties of the $\eta^{\pm}(p,q,p,s)$ that are
obtained by assuming that the pulses are sampled versions of a smooth analog
waveform. In other words, we assume that $p[n]$, $q[n]$, $r[n]$ and $s[n]$ are
sampled versions of the waveforms $p(t),q(t),r(t)$ and $s(t)$ respectively,
according to the properties in $(\mathbf{As2})$. This means that we can
express
\[
p[n]=p\left(  \left(  n-\frac{N+1}{2}\right)  \frac{T_{s}}{2M}\right)
\]
where $p\left(  t\right)  $ has the usual properties in $(\mathbf{As2})$. The
same holds for the rest of the pulses.

Let us denote $p_{m}[n]$ the $m$-th polyphase component of $p[n]$, which can be
expressed as
\[
p_{m}[n]=p_{m}\left(  \left(  n-\frac{1}{2}\right)  \frac{T_{s}}{2M}\right)
\]
where $p_{m}(t)$ is the $m$-th section of $p(t)$, namely
\[
p_{m}(t)=p\left(  t-\left(  m+\frac{\kappa}{2}-1\right)  T_{s}\right)
\]
which has support $[0,T_{s}]$. The definition of $p_{m}(t)$ is only valid for
$m=1,\ldots,\kappa$, but we will consider $p_{m}(t)=0$ for values of $m$
outside this range. The same definitions carry over to the other pulses,
namely $q,r$ and $s$.

With all these definitions, we are now in a position to establish the first
asymptotic result associated with $\eta^{\pm}(p,q,p,s)$. The following result
asymptotically relates the original quantity $\eta^{\pm}(p,q,p,s)$ with an
equivalent definition that is constructed using the analog waveforms instead
of the sampled ones.

\begin{lemma}
	\label{lemma:eta_to_integral}Under the above assumptions, we can write
	\[
	\eta^{\pm}(p,q,r,s)=\bar{\eta}^{\pm}(p,q,r,s)+O(M^{-2})
	\]
	where
	\begin{align*}
		\bar{\eta}^{\pm}(p,q,r,s)	&=\sum_{\ell=1}^{2\kappa-1}\sum_{m=1}^{\kappa}%
		\sum_{n=1}^{\kappa}A^{\left(  \ell,m,n\right)  }\left[  p,q,r,s\right]  \pm
		B^{\left(  \ell,m,n\right)  }\left[  p,q,r,s\right]
	\end{align*}
	
	and where $A^{\left(  \ell,m,n\right)  }\left[  p,q,r,s\right]  $ and
	$B^{\left(  \ell,m,n\right)  }\left[  p,q,r,s\right]  $ are defined as:
	\begin{align*} 
	A^{\left(  \ell,m,n\right)  }\left[  p,q,r,s\right]  &  =\frac{1}{T_{s}}%
	\int_{0}^{T_{s}}p_{m}(t)q_{\ell-m+1}(T_{s}-t)r_{n}(t)s_{\ell-n+1}(T_{s}-t)dt\\
	&  +\frac{1}{T_{s}}\int_{0}^{\frac{T_s}{2}}p_{m}(t)q_{\ell-m+1}(\frac{T_s}{2}-t)r_{n}%
	(t)s_{\ell-n+1}(\frac{T_s}{2}-t)dt\\
	&  +\frac{1}{T_{s}}\int_{\frac{T_s}{2}}^{T_{s}}p_{m}(t)q_{\ell-m+1}(\frac{3T_s}{2}-t)r_{n}(t)s_{\ell-n+1}(\frac{3T_s}{2}-t)dt
	\end{align*}
	and
	\begin{align*}
	B^{\left(  \ell,m,n\right)  }\left[  p,q,r,s\right]  &  =\frac{1}{T_{s}}%
	\int_{0}^{\frac{T_s}{2}}p_{m}(t)q_{\ell-m+1}(T_{s}-t)r_{n}(\frac{T_s}{2}-t)s_{\ell
		-n+1}(\frac{T_s}{2}+t)dt\\
	&  +\frac{1}{T_{s}}\int_{\frac{T_s}{2}}^{T_{s}}p_{m}(t)q_{\ell-m+1}(T_{s}%
	-t)r_{n}(\frac{3T_s}{2}-t)s_{\ell-n+1}(t-\frac{T_s}{2})dt\\
	&  +\frac{1}{T_{s}}\int_{0}^{\frac{T_s}{2}}p_{m}(t)q_{\ell-m+1}(\frac{T_s}{2}-t)r_{n}%
	(\frac{T_s}{2}-t)s_{\ell-n+1}(t)dt\\
	&  +\frac{1}{T_{s}}\int_{\frac{T_s}{2}}^{T_{s}}p_{m}(t)q_{\ell-m+1}(\frac{3T_s}{2}-t)r_{n}(\frac{3T_s}{2}-t)s_{\ell-n+1}(t)dt.
	\end{align*}
	
\end{lemma}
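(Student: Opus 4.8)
The plan is to rewrite $\eta^{\pm}(p,q,r,s)$ as an explicit finite sum of discrete correlations between the polyphase sections of the four pulses, and then to recognize each of these as a midpoint-rule Riemann sum converging, with error $O(M^{-2})$, to one of the seven integrals that make up $A^{(\ell,m,n)}$ and $B^{(\ell,m,n)}$.

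\textbf{Step 1 (unpacking).} Put $t_n=(n-\tfrac12)\tfrac{T_s}{2M}$, so that by $\mathbf{(As2)}$ the $n$-th sample of the $m$-th section is $p_m[n]=p_m(t_n)$, and record the elementary reflection identities $t_{2M+1-n}=T_s-t_n$, $t_{M+1-n}=\tfrac{T_s}{2}-t_n$ and $t_{n\pm M}=t_n\pm\tfrac{T_s}{2}$. Expanding the row-wise convolutions, the $(n,\ell)$ entry of $\mathcal{R}(p,q)$ is $\sum_m p_m(t_n)\,q_{\ell-m+1}(T_s-t_n)$, while that of $\mathcal{S}(p,q)$ is the same expression with one of the two arguments replaced by its half-swap counterpart coming from the factor $\mathbf{J}_2\otimes\mathbf{I}_M$. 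Forming $\mathcal{R}\mathcal{R}^T$, $\mathcal{S}\mathcal{S}^T$ and taking traces against $\mathbf{U}^{\pm}=\mathbf{I}_2\otimes(\mathbf{I}_M\pm\mathbf{J}_M)$, the identity part of $\mathbf{U}^{\pm}$ selects the diagonal (a sum $\sum_{n=1}^{2M}$) and the $\pm\mathbf{J}_M$ part selects the anti-diagonal of each $M\times M$ block (a sum $\sum_{n=1}^{M}$, using $\mathrm{tr}[\mathbf{J}_M\mathbf{X}]=\sum_i \mathbf{X}_{M+1-i,i}$, which ties $n$ to $M+1-n$ or to $2M+1-n$). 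The outcome is that $2M\,\eta^{\pm}(p,q,r,s)$ is a fixed linear combination, with coefficients $\pm1$, of sums of the form
\[
\sum_{\ell=1}^{2\kappa-1}\ \sum_{m,n=1}^{\kappa}\ \sum_{a\in I} p_m(\phi_1(t_a))\,q_{\ell-m+1}(\phi_2(t_a))\,r_n(\phi_3(t_a))\,s_{\ell-n+1}(\phi_4(t_a)),
\]
where $I$ is $\{1,\dots,2M\}$, $\{1,\dots,M\}$ or $\{M+1,\dots,2M\}$ and each $\phi_i$ is affine in $t_a$ with slope $\pm1$ and intercept a multiple of $T_s/2$, the particular $\phi_i$ being read off from which reflections were composed.

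\textbf{Step 2 (quadrature and identification).} The abscissae $t_a$, $a\in I$, are exactly the midpoints of the equal subintervals partitioning $[0,T_s]$ (respectively $[0,T_s/2]$ or $[T_s/2,T_s]$), and $t\mapsto p_m(\phi_1(t))q_{\ell-m+1}(\phi_2(t))r_n(\phi_3(t))s_{\ell-n+1}(\phi_4(t))$ is $\mathcal{C}^{\infty}$ by $\mathbf{(As2)}$; hence the midpoint rule gives $\tfrac1{2M}\sum_{a\in I}(\cdot)=\tfrac1{T_s}\int_I(\cdot)\,dt+O(M^{-2})$, the remainder being bounded by a constant (depending on $T_s$) times the sup-norm of the second derivative of the integrand, which is uniform in $M$ since the analog sections do not depend on $M$. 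Because the sums over $\ell,m,n$ have a number of terms independent of $M$, all these remainders accumulate to $O(M^{-2})$. It then remains to perform the affine substitutions $u=\phi_i^{-1}(t)$ in the limiting integrals: the contributions from the identity (diagonal) part collapse the four arguments to $t,T_s-t,t,T_s-t$ and to its analogues over $[0,T_s/2]$ and $[T_s/2,T_s]$, reproducing the three integrals of $A^{(\ell,m,n)}$; the contributions from the $\mathbf{J}_M$ part introduce half-shifted arguments of type $\tfrac{T_s}{2}\mp t$ in two of the four factors and reproduce the four integrals of $B^{(\ell,m,n)}$. Finally the global sign is consistent: the $\mathbf{J}_M$ contributions enter with $+$ when $\mathbf{U}^+$ multiplies $\mathcal{R}$ and $\mathbf{U}^-$ multiplies $\mathcal{S}$ and with $-$ in the reversed case, which is exactly what turns $A+B$ into $A-B$; collecting everything yields $\eta^{\pm}=\bar\eta^{\pm}+O(M^{-2})$.

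The main obstacle is purely the bookkeeping in Step~1 together with the term-by-term matching at the end of Step~2: one has to keep the section index, the within-section index, the convolution index $\ell$ and the two block indices straight, and then check that after composing all the $\mathbf{J}_{2M}$, $\mathbf{J}_M$ and half-swap reflections the four factors of every term indeed receive the arguments dictated by $A^{(\ell,m,n)}$ or $B^{(\ell,m,n)}$ with the sign pattern matching $\pm B$. I would fix a uniform notation for the affine maps $\phi_i$ at the outset and verify the identification one integral at a time; the analytic content (the midpoint-rule error estimate and the summation of finitely many $O(M^{-2})$ terms) is then routine.
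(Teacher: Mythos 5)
Your proposal is correct and follows essentially the same route as the paper, whose own proof is a one-line appeal to ``the definition of $\eta^{\pm}(p,q,r,s)$ and the Riemann integral'' with details omitted. You supply precisely those omitted details, and in particular you correctly identify the one non-trivial analytic point: the samples $t_n=(n-\tfrac12)\tfrac{T_s}{2M}$ are midpoints of the quadrature subintervals, so the composite midpoint rule (not a generic one-sided Riemann sum) is what delivers the stated $O(M^{-2})$ rather than $O(M^{-1})$ error.
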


\begin{proof}
	The proof is a direct consequence of the definition of $\eta^{\pm}(p,q,r,s)$
	and the Riemann integral. Details are omitted due to the space constraints.
\end{proof}

The above lemma allows us to express $\eta^{\pm}(p,q,r,s)$ as a function of
integrals of the analog waveform sections $p_{m}(t),q_{m}(t),r_{m}%
(t),s_{m}(t)$. This turns out to be very convenient for the following result,
which provides an asymptotic relationship among different $\eta^{\pm
}(p,q,r,s)$ with respect to the derivatives of the corresponding pulses.

\begin{proposition}
	\label{proposition:etas_asym_deriv}Under the above assumptions and
	definitions, we can write
	\begin{align}
	\eta^{\pm}\left(  p^{\prime},q,r,s\right)  -\eta^{\pm}\left(  p,q^{\prime
	},r,s\right)  +\eta^{\mp}\left(  p,q,r^{\prime},s\right)-\eta^{\mp}\left(
	p,q,r,s^{\prime}\right) 	&=-\left[  \mat{U}^{\mp}\mathcal{R}\left(  p,q\right)  \mathcal{R}%
	^{T}\left(  r,s\right)  \mat{U}^{\pm}\right]  _{1,1} \nonumber\\
	&-\left[  \mat{U}%
	^{\mp}\mathcal{R}\left(  p,q\right)  \mathcal{R}^{T}\left(  r,s\right)
	\mat{U}^{\pm}\right]  _{M+1,M+1} \nonumber\\
	&-\left[  \mat{U}^{\pm}\mathcal{S}\left(  p,q\right)  \mathcal{S}^{T}\left(
	r,s\right)  \mat{U}^{\mp}\right]  _{1,1} \nonumber\\
	&-\left[  \mat{U}^{\pm
	}\mathcal{S}\left(  p,q\right)  \mathcal{S}^{T}\left(  r,s\right)
	\mat{U}^{\mp}\right]  _{M+1,M+1}+O(M^{-1}) \label{eq:eta_asym_derivs}%
	\end{align}
	
\end{proposition}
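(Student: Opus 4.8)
The plan is to reduce (\ref{eq:eta_asym_derivs}) to a single integration by parts after passing to the integral representation of Lemma~\ref{lemma:eta_to_integral}. First I would invoke Lemma~\ref{lemma:eta_to_integral} to replace each of the four quantities on the left-hand side of (\ref{eq:eta_asym_derivs}) by its integral surrogate $\bar{\eta}^{\pm}$ (resp. $\bar{\eta}^{\mp}$) at the cost of an $O(M^{-2})$ error, so it suffices to prove the identity for the $\bar\eta$'s with an $O(M^{-1})$ remainder. Since $p^{(1)}[n]=T_{s}p^{(1)}(\cdot)$ by $\mathbf{(As2)}$, the prefactor $1/T_{s}$ in the definitions of $A^{(\ell,m,n)}$ and $B^{(\ell,m,n)}$ is absorbed, and $\bar{\eta}^{\pm}(p',q,r,s)$ becomes a finite sum of integrals of the form $\int_{I}\frac{d}{dt}p_{m}(t)\,g(t)\,dt$, where $I\in\{[0,T_{s}],[0,T_{s}/2],[T_{s}/2,T_{s}]\}$ and $g$ is a product of one section each of $q$, $r$ and $s$ with time-reversed or time-shifted arguments.

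Next I would integrate by parts in each such integral, $\int_{I}p_{m}'g=[p_{m}g]_{\partial I}-\int_{I}p_{m}g'$, and expand $g'$ by the product rule. The $q$-section carries an argument of the form $(\text{const}-t)$ in every integrand of both $A$ and $B$, so the two sign flips cancel and its contribution reassembles, with a $+$ sign, exactly $\bar{\eta}^{\pm}(p,q',r,s)$. The $r$-section carries the argument $t$ in the $A$-integrands but $(\text{const}-t)$ in the $B$-integrands, so differentiating it flips the relative sign between the $A$- and the $B$-parts; since $\bar{\eta}^{\pm}=\sum(A+B)$ while $\bar{\eta}^{\mp}=\sum(A-B)$, this contribution equals $-\bar{\eta}^{\mp}(p,q,r',s)$. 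Symmetrically the $s$-section carries $(\text{const}-t)$ in $A$ but $(t+\text{const})$ in $B$, so its contribution equals $+\bar{\eta}^{\mp}(p,q,r,s')$. Collecting these and rearranging, the three interior integrals cancel precisely the three remaining terms on the left-hand side of (\ref{eq:eta_asym_derivs}), leaving only
\[
\bar{\eta}^{\pm}(p',q,r,s)-\bar{\eta}^{\pm}(p,q',r,s)+\bar{\eta}^{\mp}(p,q,r',s)-\bar{\eta}^{\mp}(p,q,r,s')=\sum_{\ell,m,n}\sum_{I}\left[p_{m}(t)\,g(t)\right]_{\partial I}+O(M^{-2}).
\]

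It remains to identify the boundary sum with the matrix corner entries on the right-hand side of (\ref{eq:eta_asym_derivs}). Each endpoint evaluation is a product of four section values taken at $t\in\{0,T_{s}/2,T_{s}\}$; since the polyphase index $i$ corresponds to the sampling time $\approx(i-\tfrac12)\tfrac{T_{s}}{2M}$, the endpoint $t=0$ maps to index $1$, the endpoints $t=T_{s}/2$ to indices $M,M+1$, and $t=T_{s}$ to index $2M$, which by continuity of the waveform is index $1$ of the next section. Re-indexing the triple sum over $(\ell,m,n)$ as in the proof of Lemma~\ref{lemma:eta_to_integral}, the $A$-boundary terms reassemble, up to the $O(M^{-1})$ gap between the boundary samples and the exact interval endpoints, into entries of $\mathcal{R}(p,q)\mathcal{R}^{T}(r,s)$ at the corners of the two diagonal $M\times M$ blocks; the $\mat{J}_{M}$ folds inside $\mat{U}^{\pm}$ together with the $\mat{J}_{2M}$ in the definition of $\mathcal{R}$ combine them into $[\mat{U}^{\mp}\mathcal{R}(p,q)\mathcal{R}^{T}(r,s)\mat{U}^{\pm}]_{1,1}$ and $[\cdots]_{M+1,M+1}$, while the $B$-boundary terms, whose $T_{s}/2$ breakpoint is exactly the $\mat{J}_{2}\otimes\mat{I}_{M}$ swap defining $\mathcal{S}$, give $[\mat{U}^{\pm}\mathcal{S}(p,q)\mathcal{S}^{T}(r,s)\mat{U}^{\mp}]_{1,1}$ and $[\cdots]_{M+1,M+1}$. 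The main obstacle is precisely this last step: matching the continuous endpoint products with the discrete corner entries and checking that the interplay of the three exchange operators ($\mat{J}_{2}\otimes\mat{I}_{M}$, $\mat{J}_{2M}$ and the $\mat{J}_{M}$'s inside $\mat{U}^{\pm}$) produces the ``crossed'' pattern in which $\mathcal{R}$ is flanked by $\mat{U}^{\mp}\cdots\mat{U}^{\pm}$ whereas $\mathcal{S}$ is flanked by $\mat{U}^{\pm}\cdots\mat{U}^{\mp}$, the opposite pairing to the one in the definitions of $\eta^{\pm}$ and $\eta^{\mp}$. An equivalent, purely discrete route would replace Steps~1--2 by the Taylor estimate $p^{(1)}[n]=2M(p[n+1]-p[n])+O(M^{-1})$ and discrete summation by parts on the convolution sums, with the Abel boundary terms playing the role of $[p_{m}g]_{\partial I}$; the same index bookkeeping remains the crux.
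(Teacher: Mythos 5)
Your strategy is the same as the paper's: pass to the integral surrogates via Lemma~\ref{lemma:eta_to_integral}, observe that the particular signed combination on the left-hand side is a total derivative (your integration by parts starting from the $p'$ term is exactly the paper's application of the fundamental theorem of calculus), and identify the resulting boundary sum with the corner entries of $\mat{U}^{\mp}\mathcal{R}\mathcal{R}^{T}\mat{U}^{\pm}$ and $\mat{U}^{\pm}\mathcal{S}\mathcal{S}^{T}\mat{U}^{\mp}$ up to $O(M^{-1})$. Your sign bookkeeping in the interior terms is correct: the $q$-contribution gives $+\bar{\eta}^{\pm}(p,q',r,s)$, and the opposite time-reversal patterns of $r$ and $s$ in the $A$- versus $B$-integrands are precisely what converts $\pm$ into $\mp$ for those two terms, reproducing the paper's $\sum_{\ell}\phi_{\ell}$.

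The one point where your sketch goes wrong is the final matching, which you yourself flag as the crux but then describe incorrectly: the $A$-boundary terms do \emph{not} assemble into the $\mathcal{R}\mathcal{R}^{T}$ entries and the $B$-boundary terms into the $\mathcal{S}\mathcal{S}^{T}$ entries. The $\mathcal{R}$-versus-$\mathcal{S}$ dichotomy is governed by whether the two polyphase time arguments of an endpoint product sum to $T_{s}$ (an $\mathcal{R}$ entry) or to $T_{s}/2$ or $3T_{s}/2$ (an $\mathcal{S}$ entry), and both families of integrals contribute endpoints of both types: $A^{(\ell,m,n)}$ contains pieces over $[0,T_{s}/2]$ and $[T_{s}/2,T_{s}]$ whose arguments are $T_{s}/2-t$ and $3T_{s}/2-t$, so their endpoints are $\mathcal{S}$-type, while the first two $B$-integrals have $q(T_{s}-t)$ and hence produce $\mathcal{R}$-type endpoints such as $\mu_{\ell}(T_{s},0)\,\xi_{\ell}(T_{s}/2,T_{s}/2)$. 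Concretely, $-[\mat{U}^{\mp}\mathcal{R}\mathcal{R}^{T}\mat{U}^{\pm}]_{1,1}-[\cdot]_{M+1,M+1}$ expands into $\mu_{\ell}(T_{s},0)\xi_{\ell}(T_{s},0)-\mu_{\ell}(0,T_{s})\xi_{\ell}(0,T_{s})\pm[\mu_{\ell}(T_{s},0)-\mu_{\ell}(0,T_{s})]\xi_{\ell}(T_{s}/2,T_{s}/2)\pm\mu_{\ell}(T_{s}/2,T_{s}/2)[\xi_{\ell}(0,T_{s})-\xi_{\ell}(T_{s},0)]$, whose unsigned part comes from the boundary of the first $A$-integral and whose $\pm$-signed parts come from the boundaries of the first two $B$-integrals (the $\mat{J}_{M}$'s inside $\mat{U}^{\pm}$ are what pick up these cross terms $X_{1,M}$, $X_{M,1}$). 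The remaining, $\mathcal{S}$-type endpoints from the last two $A$- and last two $B$-integrals assemble analogously into the $\mathcal{S}\mathcal{S}^{T}$ corners. So the conclusion and the overall route are right, but the partition of boundary terms you propose would not close the argument as stated; the correct sorting is by endpoint pair, not by $A$ versus $B$.
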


\begin{proof}
	Consider the definition of $A^{\left(  \ell,m,n\right)  }\left[
	p,q,r,s\right]  $ and $B^{\left(  \ell,m,n\right)  }\left[  p,q,r,s\right]  $.
	These terms consist of a number of integrals of a differentiable function on a
	compact interval of the positive real axis. Hence, we can use the
	fundamental theorem of calculus to write%
	\begin{align*}
	&\bar{\eta}^{\pm}\left(  p^{\prime},q,r,s\right)  -\bar{\eta}^{\pm}\left(
	p,q^{\prime},r,s\right)  +\bar{\eta}^{\mp}\left(  p,q,r^{\prime},s\right)
	-\bar{\eta}^{\mp}\left(  p,q,r,s^{\prime}\right)=\sum_{\ell=1}^{2\kappa
		-1}\phi_{\ell}%
	\end{align*}
	where%
	\begin{align*}
	\phi_{\ell} &  =\mu_{\ell}\left(  T_{s},0\right)  \xi_{\ell}\left(
	T_{s},0\right)  -\mu_{\ell}\left(  0,T_{s}\right)  \xi_{\ell}\left(
	0,T_{s}\right)  \\
	&  \pm\left[  \mu_{\ell}\left(  T_{s},0\right)  -\mu_{\ell}\left(
	0,T_{s}\right)  \right]  \xi_{\ell}\left(  T_{s}/2,T_{s}/2\right)  \\
	&  \pm\mu_{\ell}\left(  T_{s}/2,T_{s}/2\right)  \left[  \xi_{\ell}\left(
	0,T_{s}\right)  -\xi_{\ell}\left(  T_{s},0\right)  \right]  \\
	&  +\mu_{\ell}\left(  T_{s}/2,0\right)  \xi_{\ell}\left(  T_{s}/2,0\right)
	-\mu_{\ell}\left(  0,T_{s}/2\right)  \xi_{\ell}\left(  0,T_{s}/2\right)  \\
	&  +\mu_{\ell}\left(  T_{s},T_{s}/2\right)  \xi_{\ell}\left(  T_{s}%
	,T_{s}/2\right)  -\mu_{\ell}\left(  T_{s}/2,T_{s}\right)  \xi_{\ell}\left(
	T_{s}/2,T_{s}\right)  \\
	&  \mp\mu_{\ell}\left(  T_{s}/2,0\right)  \xi_{\ell}\left(  0,T_{s}/2\right)
	\pm\mu_{\ell}\left(  0,T_{s}/2\right)  \xi_{\ell}\left(  T_{s}/2,0\right)  \\
	&  \mp\mu_{\ell}\left(  T_{s},T_{s}/2\right)  \xi_{\ell}\left(  T_{s}%
	/2,T_{s}\right)  \pm\mu_{\ell}\left(  T_{s}/2,T_{s}\right)  \xi_{\ell}\left(
	T_{s},T_{s}/2\right)
	\end{align*}
	and where we have defined
	\begin{align*}
	\mu_{\ell}\left(  t_{1},t_{2}\right)   &  =\sum_{m=1}^{\kappa}p_{m}%
	(t_{1})q_{\ell-m+1}(t_{2})\\
	\xi_{\ell}\left(  t_{1},t_{2}\right)   &  =\sum_{m=1}^{\kappa}r_{m}%
	(t_{1})s_{\ell-m+1}(t_{2}).
	\end{align*}
	Now, according to Lemma \ref{lemma:eta_to_integral} we can replace each term
	$\bar{\eta}^{\pm}\left(  p,q,r,s\right)  $ by the corresponding $\eta^{\pm
	}\left(  p,q,r,s\right)  $ up to an error of order $O(M^{-2})$. Therefore, it
	suffices to prove that the right hand side of (\ref{eq:eta_asym_derivs}) is
	equal to $\sum_{\ell=1}^{2\kappa-1}\phi_{\ell}+O(M^{-1})$. But this follows
	directly from the fact that
	\begin{align*}
	\mu_{\ell}\left(  T_{s},0\right)   &  =\left[  \mathcal{R}_{2}\left(
	p,q\right)  \right]  _{M,\ell}+O(M^{-1})\\
	\mu_{\ell}\left(  0,T_{s}\right)   &  =\left[  \mathcal{R}_{1}\left(
	p,q\right)  \right]  _{1,\ell}+O(M^{-1})\\
	\mu_{\ell}\left(  T_{s}/2,T_{s}/2\right)   &  =\left[  \mathcal{R}_{1}\left(
	p,q\right)  \right]  _{M,\ell}+O(M^{-1})\\
	&=\left[  \mathcal{R}_{2}\left(
	p,q\right)  \right]  _{1,\ell}+O(M^{-1})\\
	\mu_{\ell}\left(  0,T_{s}/2\right)   &  =\left[  \mathcal{S}_{2}\left(
	p,q\right)  \right]  _{1,\ell}+O(M^{-1})\\
	\mu_{\ell}\left(  T_{s}/2,0\right)   &  =\left[  \mathcal{S}_{2}\left(
	p,q\right)  \right]  _{M,\ell}+O(M^{-1})\\
	\mu_{\ell}\left(  T_{s},T_{s}/2\right)   &  =\left[  \mathcal{S}_{1}\left(
	p,q\right)  \right]  _{M,\ell}+O(M^{-1})\\
	\mu_{\ell}\left(  T_{s}/2,T_{s}\right)   &  =\left[  \mathcal{S}_{1}\left(
	p,q\right)  \right]  _{1,\ell}+O(M^{-1})
	\end{align*}
	and equivalently for $\xi_{\ell}$, replacing $p,q$ with $r,s$. This concludes
	the proof of the proposition.
\end{proof}

The application of the above proposition may prove to be difficult due the
presence of the term on the right hand side of (\ref{eq:eta_asym_derivs}),
which is difficult to interpret. The following corollary establishes that
under PR conditions, this term is zero.

\begin{corollary}
	\label{corollary:etas_asym_deriv_PR}Under the above assumptions and
	definitions, if $\left(  r,s\right)  $ are PR-compliant and $p,q$ are
	symmetric or anti-symmetric but have the opposite symmetry, we can write
	\begin{align*}
	&\eta^{+}\left(  p^{\prime},q,r,s\right)  -\eta^{+}\left(  p,q^{\prime
	},r,s\right)  +\eta^{-}\left(  p,q,r^{\prime},s\right)-\eta^{-}\left(p,q,r,s^{\prime}\right)=O(M^{-1}).
	\end{align*}
	
\end{corollary}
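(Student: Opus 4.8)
The plan is to specialise Proposition~\ref{proposition:etas_asym_deriv} to the sign choice $\pm\to+$ and then remove the boundary terms on the right‑hand side of \eref{eq:eta_asym_derivs} by exploiting the perfect reconstruction of the pair $(r,s)$, in the same way as in the proof of Lemma~\ref{lemma:exact_PR}. For this sign choice, \eref{eq:eta_asym_derivs} becomes
\[
\eta^{+}(p',q,r,s)-\eta^{+}(p,q',r,s)+\eta^{-}(p,q,r',s)-\eta^{-}(p,q,r,s')=-T_{\mathcal{R}}-T_{\mathcal{S}}+O(M^{-1}),
\]
where $T_{\mathcal{R}}=[\mat{U}^{-}\mathcal{R}(p,q)\mathcal{R}^{T}(r,s)\mat{U}^{+}]_{1,1}+[\mat{U}^{-}\mathcal{R}(p,q)\mathcal{R}^{T}(r,s)\mat{U}^{+}]_{M+1,M+1}$ and $T_{\mathcal{S}}=[\mat{U}^{+}\mathcal{S}(p,q)\mathcal{S}^{T}(r,s)\mat{U}^{-}]_{1,1}+[\mat{U}^{+}\mathcal{S}(p,q)\mathcal{S}^{T}(r,s)\mat{U}^{-}]_{M+1,M+1}$. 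It therefore suffices to prove that $T_{\mathcal{S}}=0$ and $T_{\mathcal{R}}=O(M^{-1})$.

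The term $T_{\mathcal{S}}$ vanishes identically: since $(r,s)$ is PR‑compliant, the argument in the proof of Lemma~\ref{lemma:exact_PR}, applied to the pair $(r,s)$, gives $\mat{U}^{-}\mathcal{S}(r,s)=\mat{0}$, and because $\mat{U}^{-}=\mat{I}_{2}\otimes(\mat{I}_{M}-\mat{J}_{M})$ is symmetric this yields $\mathcal{S}^{T}(r,s)\mat{U}^{-}=(\mat{U}^{-}\mathcal{S}(r,s))^{T}=\mat{0}$. The same argument shows that $\mat{U}^{+}\mathcal{R}(r,s)$ is zero except for its central column, which is filled with ones; transposing (using symmetry of $\mat{U}^{+}$), $\mathcal{R}^{T}(r,s)\mat{U}^{+}$ is zero except for its central ($\kappa$‑th) row, which is filled with ones. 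Consequently every column of $\mathcal{R}(p,q)\mathcal{R}^{T}(r,s)\mat{U}^{+}$ equals the central column of $\mathcal{R}(p,q)$, so its $i$‑th entry is $[\mathcal{R}(p,q)]_{i,\kappa}$; left‑multiplying by $\mat{U}^{-}$ and reading off the $(1,1)$ and $(M+1,M+1)$ positions then gives
\[
T_{\mathcal{R}}=\bigl([\mathcal{R}(p,q)]_{1,\kappa}-[\mathcal{R}(p,q)]_{M,\kappa}\bigr)+\bigl([\mathcal{R}(p,q)]_{M+1,\kappa}-[\mathcal{R}(p,q)]_{2M,\kappa}\bigr).
\]

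To close the argument I would invoke the analog‑waveform identities already established inside the proof of Proposition~\ref{proposition:etas_asym_deriv}, which for the central column $\ell=\kappa$ read $[\mathcal{R}(p,q)]_{1,\kappa}=\mu_{\kappa}(0,T_{s})+O(M^{-1})$, $[\mathcal{R}(p,q)]_{M,\kappa}=[\mathcal{R}(p,q)]_{M+1,\kappa}=\mu_{\kappa}(T_{s}/2,T_{s}/2)+O(M^{-1})$ and $[\mathcal{R}(p,q)]_{2M,\kappa}=\mu_{\kappa}(T_{s},0)+O(M^{-1})$, with $\mu_{\kappa}(t_{1},t_{2})=\sum_{m=1}^{\kappa}p_{m}(t_{1})q_{\kappa-m+1}(t_{2})$. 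Substituting, the two bracketed differences telescope and $T_{\mathcal{R}}=\mu_{\kappa}(0,T_{s})-\mu_{\kappa}(T_{s},0)+O(M^{-1})$, so it remains to check $\mu_{\kappa}(0,T_{s})=\mu_{\kappa}(T_{s},0)$. This is elementary: consecutive analog sections of $p$ (and of $q$) share endpoint values, while $(\mathbf{As2})$ forces $p$ and $q$ to vanish at the extremities of their supports, so the two outermost section‑endpoints contribute nothing; a reindexing of the sum over $m$ then matches the two expressions term by term (both equal $\sum_{m=1}^{\kappa-1}p_{m}(T_{s})q_{\kappa-m}(T_{s})$). Hence $T_{\mathcal{R}}=O(M^{-1})$ and the corollary follows. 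The main obstacle is the middle step: correctly reading off the single‑nonzero‑row/column structure of $\mathcal{R}^{T}(r,s)\mat{U}^{+}$ and $\mathcal{S}^{T}(r,s)\mat{U}^{-}$ from the PR property and then pinning the surviving entries of $\mathcal{R}(p,q)$ to the boundary values $\mu_{\kappa}(\cdot,\cdot)$ of the smooth waveform; once that is done the cancellation $\mu_{\kappa}(0,T_{s})=\mu_{\kappa}(T_{s},0)$ is routine, and the opposite‑symmetry hypothesis on $p$ and $q$ is carried over from the ambient setting but is not actually needed for this particular estimate.
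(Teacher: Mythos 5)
Your proof is correct and follows essentially the same route as the paper: specialize Proposition~\ref{proposition:etas_asym_deriv} to the sign choice $\pm\to+$, use the PR property of $(r,s)$ to annihilate the $\mathcal{S}$-terms and collapse the $\mathcal{R}$-terms onto the central column of $\mathcal{R}(p,q)$, arriving at exactly the four-entry combination $\left[\mathcal{R}(p,q)\right]_{2M,\kappa}-\left[\mathcal{R}(p,q)\right]_{1,\kappa}+\left[\mathcal{R}(p,q)\right]_{M,\kappa}-\left[\mathcal{R}(p,q)\right]_{M+1,\kappa}$ that the paper's proof obtains. The only divergence is in the final step, where the paper concludes with a one-line appeal to symmetry while you make the estimate explicit through the boundary values $\mu_{\kappa}(\cdot,\cdot)$ and the endpoint-matching of consecutive analog sections — a correct and slightly more self-contained finish, which (as you observe) does not in fact use the opposite-symmetry hypothesis.
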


\begin{proof}
	It follows from the PR conditions that $\mat{U}^{-}\mathcal{S}\left(
	r,s\right)  $ is an all-zero matrix, whereas $\mat{U}^{+}\mathcal{R}\left(
	r,s\right)  $ contains zeros everywhere except for the central column, which
	is filled with $1$s. Proposition \ref{proposition:etas_asym_deriv} therefore
	establishes that%
	\begin{align*}
	 \eta^{+}\left(  p^{\prime},q,r,s\right)  -\eta^{+}\left(  p,q^{\prime
	},r,s\right)  +\eta^{-}\left(  p,q,r^{\prime},s\right)-\eta^{-}\left(
	p,q,r,s^{\prime}\right) &=\left[  \mathcal{R}\left(  p,q\right)  \right]  _{2M,\kappa}-\left[
	\mathcal{R}\left(  p,q\right)  \right]  _{1,\kappa}\\
	&+\left[  \mathcal{R}\left(
	p,q\right)  \right]  _{M,\kappa}-\left[  \mathcal{R}\left(  p,q\right)
	\right]  _{M+1,\kappa}+O(M^{-1})
	\end{align*}
	and the result follows from symmetry.
\end{proof}

Before we conclude this appendix, we introduce another asymptotic result that
will prove useful in the situation where two of the pulses meet the PR conditions.

\begin{lemma}
	\label{lemma:etas_asym_deriv_PR}Under the above definitions and hypotheses,
	assume additionally that $p$ and $q$ are perfect reconstruction pulses and
	that the analog waveforms $r,s$ and $r^{\prime},s^{\prime}$ are zero at the
	extreme of their support. Then, we can write%
	\[
	\eta^{\pm}\left(  p,q,r^{\prime},s\right)  -\eta^{\pm}\left(  p,q,r,s^{\prime
	}\right)  =O(M^{-2})
	\]
	
\end{lemma}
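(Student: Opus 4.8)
The plan is to exploit the fact that, once $p$ and $q$ meet the PR conditions, $\eta^{\pm}(p,q,r,s)$ collapses to a single bilinear form in the samples of $r$ and $s$ that no longer depends on the sign $\pm$, and then to recognize the difference in the statement as a midpoint-rule quadrature whose leading term is a boundary value that vanishes. Concretely, I would reuse the computation in the proof of Lemma~\ref{lemma:exact_PR}: since $(p,q)$ are PR-compliant, $\mat{U}^{-}\mathcal{S}(p,q)=\mat{0}$ and $\mat{U}^{+}\mathcal{R}(p,q)$ has all columns zero except the central one, which is filled with ones, so that inserting these facts into the definition of $\eta^{\pm}$ gives, for \emph{arbitrary} pulses $r,s$ of length $2M\kappa$,
\[
\eta^{\pm}(p,q,r,s)=\frac{1}{2M}\sum_{n=1}^{2M\kappa}r[n]\,s[2M\kappa-n+1]
\]
(the symmetry of $r,s$ plays no role here, only the PR property of $p,q$). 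Applying this to the pairs $(r',s)$ and $(r,s')$ and subtracting,
\[
\eta^{\pm}(p,q,r',s)-\eta^{\pm}(p,q,r,s')=\frac{1}{2M}\sum_{n=1}^{2M\kappa}\left(r'[n]\,s[2M\kappa-n+1]-r[n]\,s'[2M\kappa-n+1]\right).
\]

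Next I would substitute the sampling prescriptions of $\mathbf{(As2)}$. Writing $a_{n}=\left(n-\tfrac{2M\kappa+1}{2}\right)\tfrac{T_{s}}{2M}$ and $h=T_{s}/(2M)$, one checks that $a_{2M\kappa-n+1}=-a_{n}$, hence $r[n]=r(a_{n})$, $r'[n]=T_{s}r'(a_{n})$, $s[2M\kappa-n+1]=s(-a_{n})$, $s'[2M\kappa-n+1]=T_{s}s'(-a_{n})$, and the right-hand side becomes $h\sum_{n}\left(r'(a_{n})s(-a_{n})-r(a_{n})s'(-a_{n})\right)$. The points $a_{n}$ are exactly the midpoints of the $2M\kappa$ consecutive subintervals of length $h$ that partition $[-T_{s}\kappa/2,\,T_{s}\kappa/2]$, so each of the sums $h\sum_{n}r'(a_{n})s(-a_{n})$ and $h\sum_{n}r(a_{n})s'(-a_{n})$ is a composite midpoint approximation of, respectively, $\int_{-T_{s}\kappa/2}^{T_{s}\kappa/2}r'(t)s(-t)\,dt$ and $\int_{-T_{s}\kappa/2}^{T_{s}\kappa/2}r(t)s'(-t)\,dt$; since $r,s$ are $\mathcal{C}^{\infty}$, the error of each is $O(h^{2})=O(M^{-2})$. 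Hence the difference equals
\[
\int_{-T_{s}\kappa/2}^{T_{s}\kappa/2}\frac{d}{dt}\left[r(t)s(-t)\right]dt+O(M^{-2})=\left[r(t)s(-t)\right]_{-T_{s}\kappa/2}^{\,T_{s}\kappa/2}+O(M^{-2}),
\]
and the bracketed term is zero because $r$ and $s$ (and, in the companion integration-by-parts form, $r'$ and $s'$) vanish at the extremes of their support, which is the claimed estimate.

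The step I expect to be the main obstacle is the quadrature estimate. A naive left- or right-endpoint Riemann sum would be accurate only to $O(M^{-1})$, which is not enough; the argument genuinely relies on the $a_{n}$ falling at the \emph{midpoints} of the partition — a direct consequence of the symmetric discretization prescribed by $\mathbf{(As2)}$ — together with the exact cancellation of the $O(1)$ leading contribution through the fundamental theorem of calculus and the vanishing of the analog waveforms at the endpoints of their support. Keeping both ingredients — the correct quadrature order and the boundary cancellation — in play simultaneously is what yields $O(M^{-2})$ rather than merely $o(1)$.
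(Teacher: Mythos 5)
Your proof is correct and follows essentially the same route as the paper's: use the PR property of $(p,q)$ to collapse $\eta^{\pm}(p,q,\cdot,\cdot)$ to the single sum $\frac{1}{2M}\sum_{n}r[n]s[2M\kappa-n+1]$, approximate that sum by an integral to second order, and cancel the leading difference via the fundamental theorem of calculus together with the vanishing of $r$ and $s$ at the endpoints of their support. Your explicit midpoint-rule justification of the $O(M^{-2})$ quadrature error simply spells out what the paper compresses into ``follows from the Riemann integral definition.''
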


\begin{proof}
	We know that $\mat{U}^{-}\mathcal{S}\left(  p,q\right)  $ is an all zero
	matrix whereas the entries of $\mat{U}^{+}\mathcal{R}\left(  p,q\right)  $
	are all zero except for the central column, which is filled with ones. This
	means that tr$\left[  \mat{U}^{\mathbf{+}}\mathcal{S}(p,q)\mathcal{S}%
	(r,s)^{T}\right]  =0$ and
	\begin{align*}
	\frac{1}{2M}\text{tr}\left[  \mat{U}^{\mathbf{+}}\mathcal{R}%
	(p,q)\mathcal{R}(r,s)^{T}\right]& =\frac{1}{2M}\sum_{n=1}^{2M\kappa
	}r[n]s[2M\kappa-n+1]\\
	& \hskip2em =\frac{1}{T_{s}}\int_{0}^{\kappa T_{s}}r(t)s(\kappa T_{s}-t)dt+O(M^{-2})
	\end{align*}
	where the last identity follows from the Riemann integral definition.
	Therefore, since
	\begin{align*}
	&\frac{1}{T_{s}}\int_{0}^{\kappa T_{s}}r^{\prime}(t)s(\kappa T_{s}%
	-t)dt-\frac{1}{T_{s}}\int_{0}^{\kappa T_{s}}r(t)s^{\prime}(\kappa
	T_{s}-t)dt=r(\kappa T_{s})s(0)-r(\kappa T_{s})s(0)=0,
	\end{align*}

	we obtain the result.
\end{proof}

%
%

\ifCLASSOPTIONcaptionsoff
  \newpage
\fi



\bibliographystyle{IEEEtran}

\bibliography{IEEEabrv,refs}
\end{document}